\newtheorem{theorem}{Theorem}[section]
\newtheorem{lemma}{Lemma}[section]
\newtheorem{corollary}{Corollary}[section]
\newtheorem{remark}{Remark}[section]
\newtheorem{example}{Example}[section]
\newtheorem{proposition}{Proposition}[section]
\newtheorem{definition}{Definition}[section]
\makeatletter \@addtoreset{equation}{section} \makeatother
\newcommand{\udots}{\mathinner{\mskip1mu\raise1pt\vbox{\kern7pt\hbox{.}}
\mskip2mu\raise4pt\hbox{.}\mskip2mu\raise7pt\hbox{.}\mskip1mu}}
\begin{document}

\title{ Non-Invertible-Element Constacyclic  Codes over Finite PIRs\footnote{
 E-Mail addresses: hwliu@mail.ccnu.edu.cn (H. Liu), jinggeliu@mails.ccnu.edu.cn (J. Liu).}}

\author{Hongwei Liu,~Jingge Liu}

\date{\small
School of Mathematics and Statistics, Central China Normal University,Wuhan, 430079, China\\
}
\maketitle


%

\section*{Abstract}
 \addcontentsline{toc}{section}{\protect Abstract} 
 \setcounter{equation}{0} 

In this paper we introduce the notion of $\lambda$-constacyclic codes over finite rings $R$ for arbitary element $\lambda$ of $R$.
We study the non-invertible-element constacyclic codes (NIE-constacyclic codes) over finite principal ideal rings (PIRs).
We determine the algebraic structures of all NIE-constacyclic codes over finite chain rings,  give the unique form of the sets of the defining polynomials and obtain their minimum Hamming distances.
A general form of the duals of NIE-constacyclic codes over finite chain rings is also provided.
In particular, we give a necessary and sufficient condition for the dual of an NIE-constacyclic code to be an NIE-constacyclic code.
Using the Chinese Remainder Theorem, we study the  NIE-constacyclic codes over finite PIRs.
Furthermore, we construct some optimal NIE-constacyclic codes over finite PIRs in the sense that they achieve the maximum possible minimum Hamming distances for some given lengths and cardinalities.

\medskip
\noindent{\large\bf Keywords: }\medskip   finite commutative PIR;  finite commutative chain ring; constacyclic code; minimum Hamming distance.

\medskip
2020 {\it Mathematics Subject Classification.} \, 94B05,~94B60,~94B65

\section{Introduction}

Codes over finite rings  have intrigued a lot of researchers thanks to the discovery that some important families of binary non-linear codes are in fact images under a Gray map of linear codes over $\mathbb{Z}_4$ (see, for example, \cite{rf3,rf4,rf5}).
In particular, the class of constacyclic codes, which contains the well-known class of cyclic and negacyclic codes is interesting for both theoretical and practical aspects.
In the past few decades, scholars have been interested in $\lambda$-constacyclic codes over finite rings where $\lambda$ is invertible.
A constacyclic code is called a \emph{simple-root code} if the characteristic of the finite ring is relatively prime to the length of this code; otherwise it is called a \emph{repeated-root code}.
Dinh and L\'{o}pez-Permouth\cite{rf6} studied the simple-root cyclic codes and negacyclic codes and their duals over a finite chain ring.
As the decomposition of polynomials over finite rings is not unique, the structure of repeated-root constacyclic codes over finite rings is more complex.
Since 2003, some special classes of repeated-root constacyclic codes over certain finite chain rings have been studied by a lot of authors (see, for example,\cite{rf7,rf8,rf9,rf10,rf11,rf12,rf13,rf14,rf15,rf16,rf17,rf18,rf19}).


All previous studies on $\lambda$-constacyclic codes only considered the case when $\lambda$ is invertible.
In this paper, we define the notion of $\lambda$-constacyclic codes where $\lambda$ is non-invertible.
When $\lambda$ is non-invertible, we can determine the algebraic structures of all  $\lambda$-constacyclic codes over finite chain rings and  give a characterization the dual codes of  such $\lambda$-constacyclic codes.
But the minimum Hamming distances of such codes are not good, actually all nonzero  $\lambda$-constacyclic codes have minimum Hamming distance one.
However, when we focus on  $\lambda$-constacyclic codes over finite PIRs, where $\lambda$ is non-invertible, there exist some optimal codes in the sense that they achieve the maximum possible minimum Hamming distances for some given lengths and cardinalities.

Based on this motivation, we generalize the concept of $\lambda$-constacyclic codes over finite rings to the case where $\lambda$ is arbitrary.
When $\lambda$ is non-invertible,  then we call such $\lambda$-constacyclic codes as NIE-constacyclic codes.
We study the NIE-constacyclic codes over finite PIRs.
Firstly, we determine the algebraic structures of all NIE-constacyclic codes over finite chain rings,
obtain the minimum Hamming distance,
and give a general form of the duals of such codes.
In particular, we provide  a necessary and sufficient condition for the dual of an NIE-constacyclic code to be an NIE-constacyclic code.
Moreover, by the Chinese Remainder Theorem, the algebraic structures and  the minimum Hamming distances of NIE-constacyclic codes over finite PIRs can be easily obtained.
It is worth mentioning that we find some optimal codes in the family of NIE-constacyclic codes over finite PIRs.

This paper is organized as follows.
Some necessary background materials are given in Section~$2$.
In  Section $3$, we determine the algebraic structures of all NIE-constacyclic codes over finite chain rings,  give the unique form of the sets of the defining polynomials, obtain the minimum Hamming distance and provide a general form of the duals of such codes.
A necessary and sufficient condition for the dual of an NIE-constacyclic code to be an NIE-constacyclic code is also presented in Section $3$.
We obtain the algebraic structures and the minimum Hamming distances of  NIE-constacyclic codes over finite PIRs and construct some optimal NIE-constacyclic codes over finite PIRs in  Section $4$.

\section{Preliminaries}

\subsection{Finite Chain Rings and Finite PIRs}


Let $\mathcal{R}$ and $\mathcal{R}^{\prime}$ be two finite commutative Frobenius rings, $n$ be a positive integer and $\rho: \mathcal{R}\rightarrow \mathcal{R}^{\prime}$ be a surjective homomorphism.
Then $\rho: \mathcal{R}\rightarrow \mathcal{R}^{\prime}$ can also denote the following three extended maps
$$\rho: \mathcal{R}^n\rightarrow \mathcal{R}^{\prime n}, $$
$$\rho: \mathcal{R}[x]\rightarrow \mathcal{R}^{\prime}[x],$$
$$\rho: \mathcal{R}[x]/ \langle x^n-\lambda\rangle\rightarrow \mathcal{R}^{\prime}[x]/ \langle x^n-\rho(\lambda)\rangle$$ in the usual way.
It is easy to see that $\rho: \mathcal{R}^n\rightarrow \mathcal{R}^{\prime n}$, $\rho: \mathcal{R}[x]\rightarrow \mathcal{R}^{\prime}[x]$ and $\rho: \mathcal{R}[x]/ \langle x^n-\lambda\rangle\rightarrow \mathcal{R}^{\prime}[x]/ \langle x^n-\rho(\lambda)\rangle$ are surjective.

Let $\mathcal{R}$ be a finite Frobenius ring with identity.
A finite ring with identity  is called a \emph{local ring} if it has a unique maximal ideal $M$ and
a \emph{chain ring} if its ideals are linearly ordered by inclusion.
Then the following proposition holds.

\begin{proposition}\label{proo1}(\cite{rf6})
For a finite ring $R$ with identity, the following statements are equivalent:
\begin{description}
  \item[(1)] $R$ is a local ring and the (unique) maximal ideal $M$ of $R$ is principal, i.e., $M=\gamma R$ for some $\gamma \in R$.
  \item[(2)] $R$ is a local PIR.
  \item[(3)] $R$ is a chain ring and all its ideals are given by
  $$\{0\}=\gamma^e R \subsetneqq \gamma^{e-1} R \subsetneqq \cdots \subsetneqq \gamma R \subsetneqq \gamma^{0} R=R,$$
where $e$ is the nilpotency index of $\gamma$.
Moreover,  $R/\gamma R$ is a finite field (called the\emph{ residue field} of $R$) and $| \gamma^{l} R| = |R/\gamma R|^{e-l}$ for $0\leq l \leq e$. (Throughout this paper, $|A|$ denotes the cardinality of the set $A$.)
\end{description}
\end{proposition}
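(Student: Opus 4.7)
The plan is to prove the equivalence via the cycle (1) $\Rightarrow$ (2) $\Rightarrow$ (3) $\Rightarrow$ (1), and then dispatch the residue-field and cardinality assertions. The conceptual backbone is a single normal-form statement: in a finite local ring whose maximal ideal is generated by a single element $\gamma$, the element $\gamma$ is nilpotent of some index $e$, and every nonzero element of $R$ can be written as $\gamma^i u$ with $0 \le i < e$ and $u$ a unit. Once this is in hand, all three implications become almost immediate.

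First I would establish nilpotency of $\gamma$. Because $R$ is finite, the descending chain $\gamma R \supseteq \gamma^2 R \supseteq \cdots$ stabilizes; since $\gamma R$ is the Jacobson radical of the local ring $R$, Nakayama's lemma forces the stable value to be $\{0\}$, so a minimal $e$ exists with $\gamma^e = 0$. Writing any nonzero $x \in R$ as $\gamma^i u$ then follows by taking $i$ maximal with $x \in \gamma^i R$: we obtain $x = \gamma^i y$ with $y \notin \gamma R$, so $y$ lies outside the unique maximal ideal and is therefore a unit. This yields (1) $\Rightarrow$ (2) at once: for any nonzero ideal $I$, pick the minimal $i$ with $I$ meeting $\gamma^i R \setminus \gamma^{i+1} R$, write the chosen element as $\gamma^i u$, and observe that $\gamma^i = \gamma^i u \cdot u^{-1} \in I \subseteq \gamma^i R$, hence $I = \gamma^i R$. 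In particular every ideal is principal.

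The same normal form gives (2) $\Rightarrow$ (3): a local PIR has a principal maximal ideal, so the argument above runs without change and enumerating all ideals produces the strictly descending list $R \supsetneq \gamma R \supsetneq \cdots \supsetneq \gamma^e R = \{0\}$, with strict inclusions because $\gamma^i R = \gamma^{i+1} R$ would force $\gamma^i R = 0$ by Nakayama, contradicting the minimality of $e$. The implication (3) $\Rightarrow$ (1) is immediate, since a totally ordered ideal lattice produces a unique maximal ideal, which by the description is $\gamma R$. Finally, $R/\gamma R$ is a finite field as the quotient of a finite ring by a maximal ideal, and for each $0 \le l < e$ the successive quotient $\gamma^l R / \gamma^{l+1} R$ is a cyclic $R$-module generated by $\gamma^l$ on which $\gamma R$ acts as zero, so it is a one-dimensional vector space over $R/\gamma R$ and has cardinality $|R/\gamma R|$; telescoping from $\gamma^l R$ down to $\gamma^e R = \{0\}$ yields $|\gamma^l R| = |R/\gamma R|^{e-l}$.

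The only delicate point is the appeal to Nakayama's lemma to secure both the nilpotency of $\gamma$ and the non-collapse of the successive quotients, but this is automatic here because $R$ is finite and $\gamma R$ is the Jacobson radical. Since the result is classical and attributed to \cite{rf6}, no additional machinery is needed; the task is mainly to organize the normal-form reasoning cleanly.
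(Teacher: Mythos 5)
The paper does not prove this proposition at all: it is quoted from the cited reference \cite{rf6} as a known structure theorem for finite chain rings, so there is no in-paper argument to compare against. Your self-contained proof is correct and follows the standard route: Nakayama (or finiteness of the power chain $\gamma R \supseteq \gamma^2 R \supseteq \cdots$) gives nilpotency of $\gamma$, the normal form $x=\gamma^i u$ with $u$ a unit yields that every ideal equals some $\gamma^i R$, and the one-dimensionality of each quotient $\gamma^l R/\gamma^{l+1}R$ over $R/\gamma R$ gives the cardinality formula. The only point worth flagging is that your assertion ``$R/\gamma R$ is a finite field as the quotient of a finite ring by a maximal ideal'' implicitly uses commutativity (or Wedderburn's little theorem in the noncommutative case); since the paper works exclusively with finite commutative rings, this is harmless, but you should state the hypothesis you are using.
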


Note that when we take $\gamma=0$, then the nilpotency index of $\gamma$ is $e=1$.
The finite chain ring with the maximal ideal $\{0\}$ is a finite field.

Let $R$ be a finite chain ring with the maximal ideal $\gamma R$ and let
\begin{equation*}
\begin{aligned}
\bar{~}: R  &\longrightarrow  R/\gamma R,\\
                        a   & \longmapsto       a+\gamma R
\end{aligned}
\end{equation*}
be the canonical projection of $R$ onto its residue field.

In the following of this paper,  we use the notation  $R$ to denote the finite commutative chain ring with the maximal ideal $\gamma R$, where $e$ is the nilpotency index of $\gamma$.
Then $R/\gamma R$ is isomorphic to a finite field $\mathbb F_{q}$ for some prime power $q=p^m$, where $p$ is a prime.

\begin{proposition}\label{proo2}(\cite{re2},\cite{re3})
\begin{description}
  \item[(1)] The characteristic of $R$ is $p^l$, where $1\leq l \leq e$.
Moreover, we have $|R|=q^{e}$.
  \item[(2)] There exists an element $\zeta \in R$ having the multiplicative order $q-1$.
Moreover, the cyclic subgroup generated by $\zeta$ is the only subgroup of the unit group of $R$, which is isomorphic to $\mathbb F_{q}\backslash\{0\}$.
$\mathcal{T}_{R}:=\{0,1,\zeta,\zeta^2,\ldots,\zeta^{q-2}\}$ is a complete set of coset representatives modulo $\gamma R$
which is called the \emph{Teichm\"{u}uller set} of $R$.
  \item[(3)] Any element $a\in R$ can be uniquely expressed as
\begin{equation} \label{equu2}
a=a_0+ a_1\gamma+ a_2\gamma^2+\cdots+a_{e-1}\gamma^{e-1},
\end{equation}
where $a_0,a_1,\ldots,a_{e-1}\in \mathcal{T}_{R}$.
Moreover, $a$ is a unit of $R$ if and only if $a_0\neq 0$.
\end{description}
\end{proposition}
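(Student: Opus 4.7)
The plan is to address the three parts in order, using the chain of ideals from Proposition~\ref{proo1} as the backbone and Hensel-style lifting to produce the Teichm\"{u}ller set.

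For part (1), since $R/\gamma R\cong\mathbb F_q$ has characteristic $p$, the element $p\cdot 1\in R$ reduces to zero and hence lies in $\gamma R$; because $(\gamma R)^e=\{0\}$, we obtain $p^e\cdot 1=0$, so $\mathrm{char}(R)=p^l$ for some $1\le l\le e$. For $|R|=q^e$, I would exploit the filtration $\{0\}=\gamma^e R\subsetneqq\cdots\subsetneqq\gamma R\subsetneqq R$ from Proposition~\ref{proo1}. Each successive quotient $\gamma^i R/\gamma^{i+1}R$ is an $R$-module annihilated by $\gamma$, hence a module over $\mathbb F_q$; because $R$ is a chain ring there is no room for an intermediate $R$-submodule, so the quotient is a simple $\mathbb F_q$-module, i.e.\ one-dimensional of size $q$. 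Multiplying through the $e$ layers yields $|R|=q^e$.

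Part (2) is the technical heart. I would apply Hensel's lemma to $f(X)=X^{q-1}-1\in R[X]$: its reduction $\bar f\in\mathbb F_q[X]$ has $q-1$ distinct simple roots (the derivative $(q-1)X^{q-2}$ is a unit at each root since $\gcd(q-1,p)=1$), so each root lifts uniquely to a root of $f$ in $R$. Taking $\zeta\in R$ to be the lift of a generator of $\mathbb F_q^\times$ gives an element of multiplicative order $q-1$, and the $q-1$ lifts are exactly $1,\zeta,\ldots,\zeta^{q-2}$. Since $|1+\gamma R|=q^{e-1}$ is a power of $p$ while $\gcd(q-1,p)=1$, the unit group factors as a direct product $R^{\times}=\langle\zeta\rangle\times(1+\gamma R)$; consequently the $(q-1)$-torsion of $R^{\times}$ is precisely $\langle\zeta\rangle$, which is the unique subgroup of order $q-1$ and the unique subgroup isomorphic to $\mathbb F_q^{\times}$. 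By construction $\mathcal T_R$ reduces bijectively onto $R/\gamma R$.

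For part (3), I would build the $\gamma$-adic expansion recursively: given $a\in R$, let $a_0\in\mathcal T_R$ be the Teichm\"{u}ller lift of $\bar a$, write $a-a_0=\gamma a'$ for some $a'\in R$, and repeat with $a'$ in place of $a$. After $e$ iterations the remainder lies in $\gamma^e R=\{0\}$, producing the expansion. For uniqueness, the map $\mathcal T_R^e\to R$, $(a_0,\ldots,a_{e-1})\mapsto\sum_{i=0}^{e-1}a_i\gamma^i$, is surjective by the existence argument, and both sides have cardinality $q^e$ by part~(1), so it is bijective. The unit characterization is then immediate: $a$ is a unit iff $\bar a\neq 0$ iff $a_0\neq 0$. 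I expect the main obstacle to be the rigorous verification of Hensel's lemma in this finite-ring setting -- specifically, showing that Newton iteration on $X^{q-1}-1$ stabilises after at most $\lceil\log_2 e\rceil$ steps and produces the promised lifts -- since everything else reduces to filtration and counting arguments.
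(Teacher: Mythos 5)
The paper offers no proof of this proposition: it is stated as a known result and attributed to McDonald \cite{re2} and Nechaev \cite{re3}, so there is no in-paper argument to compare yours against. Judged on its own, your proof is correct and follows the standard route. Part (1) is fine: $p\cdot 1\in\gamma R$ gives $p^e\cdot 1=0$, and the counting $|R|=q^e$ already follows from the displayed chain in Proposition~\ref{proo1} (your simple-quotient argument is a valid re-derivation of it). In part (2), the Hensel step is legitimate in a finite local ring with nilpotent maximal ideal, and the decomposition $R^{\times}=\langle\zeta\rangle\times(1+\gamma R)$ by coprime orders correctly yields that $\langle\zeta\rangle$ is the full $(q-1)$-torsion, hence the unique subgroup isomorphic to $\mathbb F_{q}\backslash\{0\}$. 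Part (3) (recursive expansion for existence, cardinality count $|\mathcal{T}_{R}|^{e}=q^{e}=|R|$ for uniqueness, and the unit criterion via reduction mod $\gamma R$) is complete. The one place you flag as an obstacle, verifying Hensel's lemma, can be bypassed entirely: lift any generator of $\mathbb F_{q}\backslash\{0\}$ to a unit $u\in R$; its order is $(q-1)p^{j}$ for some $j\geq 0$ since it divides $|R^{\times}|=(q-1)q^{e-1}$ and its reduction has order $q-1$; then $\zeta:=u^{p^{j}}$ has order exactly $q-1$ and still reduces to a generator because $\gcd(p^{j},q-1)=1$. This gives existence of $\zeta$ (and hence of $\mathcal{T}_{R}$) by pure group theory, at the cost of losing the statement that each root of $X^{q-1}-1$ lifts uniquely, which your argument recovers anyway from the product decomposition.
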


For any integer $1\leq j \leq e$,  we define the map $\mu_j$ by
\begin{equation*}
\begin{aligned}
\mu_j: R &\longrightarrow  R/\gamma^jR,\\
                a   & \longmapsto       a+\gamma^jR.
\end{aligned}
\end{equation*}
Then $\mu_j$ is a surjective homomorphism from $R$ to $R/\gamma^jR$.
For any $a\in R$ and  any integer $1\leq j \leq e$, let $a_{\langle j\rangle}$ denote $a+\gamma^jR$ and $R_j$ denote $\mu_j(R)$.
It is obvious that $\mu_1$ is the canonical projection $\bar{~}$ of $R$, $R_1=\overline{R}=R/\gamma R \cong \mathbb F_{q}$ and $\mu_e$ is the identity map of $R$, $R_e=R$.

\begin{theorem}\label{thee1}
 For any integer $1\leq j \leq e$, $R_j=R/\gamma^jR$ is a finite chain ring with the maximal ideal generated by $\gamma_{\langle j\rangle} $ and $j$ is the nilpotency index of $\gamma_{\langle j\rangle}$.
 Moreover, we have the following.
\begin{description}
  \item[(1)] $R_j/\gamma_{\langle j\rangle} R_j \cong \mathbb F_{q}$ and $|R_j|=q^{j}$.
  \item[(2)] $\zeta_{\langle j\rangle}\in R_j$ has the multiplicative order $q-1$.
 The cyclic subgroup generated by $\zeta_{\langle j\rangle}$ is isomorphic to $\mathbb F_{q}\backslash \{0\}$.
 $\mathcal{T}_{R_j}=\{0,~1,\zeta_{\langle j\rangle},\zeta^2_{\langle j\rangle},\ldots,\zeta^{q-2}_{\langle j\rangle}\}$ is a complete set of coset representatives modulo $\gamma_{\langle j\rangle} R_j$.
  \item[(3)] For any $a,~b\in \mathcal{T}_{R_j}$, $a-b$ is $0$ or invertible.
   Any element $b\in R_j$ can be uniquely expressed as
\begin{equation} \label{equu1}
b=b_0+ b_1\gamma_{\langle j\rangle}+ b_2\gamma_{\langle j\rangle}^2+\cdots+b_{j-1}\gamma_{\langle j\rangle}^{j-1},
\end{equation}
where $b_0,b_1,\ldots,b_{j-1}\in \mathcal{T}_{R_j}$.
\end{description}
\end{theorem}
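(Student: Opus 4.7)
The plan is to derive everything as a corollary of Propositions~\ref{proo1} and~\ref{proo2} applied to $R_j$ itself, once we have identified $R_j$ as a finite chain ring. The first step is to establish the chain structure via the ideal correspondence under the surjection $\mu_j: R \twoheadrightarrow R/\gamma^j R$. The ideals of $R_j$ are in bijection with the ideals of $R$ that contain $\gamma^j R$; by Proposition~\ref{proo1}(3), these are precisely
\[
\gamma^j R \subsetneqq \gamma^{j-1} R \subsetneqq \cdots \subsetneqq \gamma R \subsetneqq R,
\]
so the ideals of $R_j$ form the chain $\{0\} = \gamma_{\langle j\rangle}^{j}R_j \subsetneqq \gamma_{\langle j\rangle}^{j-1}R_j \subsetneqq \cdots \subsetneqq \gamma_{\langle j\rangle} R_j \subsetneqq R_j$. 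This identifies $R_j$ as a finite chain ring with maximal ideal $\gamma_{\langle j\rangle} R_j$ and shows that the nilpotency index of $\gamma_{\langle j\rangle}$ is exactly $j$ (the strictness of the inclusions is inherited from $R$).

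With $R_j$ recognised as a chain ring, part (1) follows almost automatically. The third isomorphism theorem gives $R_j/\gamma_{\langle j\rangle} R_j \cong R/\gamma R \cong \mathbb F_q$, and then either Proposition~\ref{proo1}(3) applied to $R_j$, or a direct count $|R_j| = |R|/|\gamma^j R| = q^e/q^{e-j} = q^j$, yields the cardinality. The expansion formula in (3) is just Proposition~\ref{proo2}(3) applied to the chain ring $R_j$ whose Teichm\"uller-type set has $q$ elements, and the statement that $a-b$ is $0$ or a unit for $a,b\in\mathcal T_{R_j}$ is a standard consequence: if $a\ne b$ then their images in $R_j/\gamma_{\langle j\rangle} R_j \cong \mathbb F_q$ are distinct (shown below), hence $a-b$ lies outside the unique maximal ideal and is therefore a unit.

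The only place where a little genuine work is needed is part (2), where the specific elements $\zeta_{\langle j\rangle}$ and $\mathcal T_{R_j}$ must be shown to have the claimed multiplicative and coset-representative properties. The existence of \emph{some} element of order $q-1$ in $R_j$ is guaranteed by Proposition~\ref{proo2}(2), but I need to check that the reduction $\zeta_{\langle j\rangle}$ of the chosen $\zeta\in R$ is such an element. The key observation is that the canonical projection factors as $R \xrightarrow{\mu_j} R_j \twoheadrightarrow R_j/\gamma_{\langle j\rangle} R_j \cong \mathbb F_q$ and agrees with the original projection $\bar{\phantom{a}}: R \to \mathbb F_q$. Since $\bar\zeta$ generates $\mathbb F_q^\times$, any relation $\zeta_{\langle j\rangle}^d = 1$ in $R_j$ descends to $\bar\zeta^d = 1$ in $\mathbb F_q$, forcing $(q-1)\mid d$; combined with $\zeta_{\langle j\rangle}^{q-1} = 1$ (inherited from $R$), this pins the order to exactly $q-1$. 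The same lifting argument shows the elements $0,1,\zeta_{\langle j\rangle},\ldots,\zeta_{\langle j\rangle}^{q-2}$ are pairwise incongruent modulo $\gamma_{\langle j\rangle} R_j$, giving a complete set of $q$ coset representatives.

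The anticipated obstacle is mostly bookkeeping rather than a conceptual difficulty: one must be careful that the element $\zeta$ and the notation $\gamma_{\langle j\rangle}$, $\zeta_{\langle j\rangle}$ behave compatibly under $\mu_j$ and under the further reduction to $\mathbb F_q$. Once this compatibility of projections is cleanly stated, every claim of Theorem~\ref{thee1} reduces either to the general structure theorem for chain rings (Propositions~\ref{proo1},~\ref{proo2}) applied internally to $R_j$, or to a one-line lifting argument for the multiplicative order.
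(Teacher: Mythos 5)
Your proposal is correct and follows essentially the same route as the paper: the Ideal Correspondence Theorem under $\mu_j$ for the chain structure and nilpotency index, the induced isomorphism $R_j/\gamma_{\langle j\rangle}R_j\cong R/\gamma R$ (the paper's map $\phi_j$ is exactly your third-isomorphism-theorem step) for part (1), a descent-to-the-residue-field argument for the order of $\zeta_{\langle j\rangle}$ in part (2), and an appeal to Proposition~\ref{proo2} for part (3). The only cosmetic difference is that you count $|R_j|=|R|/|\gamma^jR|$ directly rather than citing the chain-ring cardinality formula applied to $R_j$.
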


\begin{proof}
It is obvious that $\gamma^j_{\langle j\rangle}=0$.
If $\gamma^k_{\langle j\rangle}=0$ for some nonnegative integer $k$, then $\gamma^k \in \gamma^j R$,
which means that $k\geq j$.
As a result, $j$ is the nilpotency index of $\gamma_{\langle j\rangle}$.

It is easy to see that $\ker(\mu_j)=\gamma^j R$.
Let $S=\{ ~U~|~U~\text{is an ideal of}~R~\text{and}~U\supseteq \gamma^j R~\}$ and
$T=\{ ~V~|~V~\text{is an ideal of}~R/\gamma^jR~\}$.
By the Ideal Correspondence Theorem, the map induced by $\mu_j$ is a bijective map betweens $S$ and $T$.
Note that $S=\{ ~\gamma^k R~|~0\leq k\leq j~\}$.
Then we have $T=\mu_j(S)=\{ ~\gamma_{\langle j\rangle}^k R_j~|~0\leq k\leq j~\}$.
So $\{0\}=\gamma_{\langle j\rangle}^j R_j \subsetneqq \gamma_{\langle j\rangle}^{j-1} R_j \subsetneqq \cdots \subsetneqq \gamma_{\langle j\rangle} R_j \subsetneqq \gamma_{\langle j\rangle}^0 R_j=R_j$ are all ideals of $R_j$, which means that $R_j=R/\gamma^jR$ is a finite chain ring with the maximal ideal generated by $\gamma_{\langle j\rangle}$.

(1). Let $\phi_j$ be the following map
\begin{equation*}
\begin{aligned}
\phi_j :~~ R_j=R/\gamma^j R ~~&\longrightarrow ~~ R_1=R/\gamma R,\\
                   a + \gamma^j R  ~~~~      & \longmapsto  ~~~~ a+\gamma R,
\end{aligned}
\end{equation*}
for any $a\in R$.
It is clear that $\phi_j$ is a well-defined surjective homomorphism and $\ker(\phi_j)=\gamma_{\langle j\rangle} R_j$.
Then $R_j/\gamma_{\langle j\rangle} R_j \cong R/\gamma R \cong \mathbb F_{q}$ and $|R_j|=q^{j}$ by Proposition \ref{proo2}.

(2). Note that $\zeta^{q-1}_{\langle j\rangle}= \left( \mu_j\left(   \zeta \right)\right)^{q-1}= \mu_j\left(   \zeta^{q-1} \right)=\mu_j\left( 1\right) = 1_{\langle j\rangle}$.
If $\zeta^{k}_{\langle j\rangle}= 1_{\langle j\rangle}$ for some $1\leq k < q-1$, then
$\zeta^{k}-1\in \gamma^j R\subseteq \gamma R$.
Thus $\zeta^{k}+\gamma R$ and $1+\gamma R$ are the same coset of $R$ modulo $\gamma R$, which is a contradiction.
Hence, $\zeta_{\langle j\rangle}\in R_j$ has multiplicative order $q-1$ and the cyclic subgroup generated by $\zeta_{\langle j\rangle}$ is isomorphic to $\mathbb F_{q}\backslash\{0\}$.
By Proposition \ref{proo2}, $\{0,~1,\zeta_{\langle j\rangle},\zeta^2_{\langle j\rangle},\ldots,\zeta^{q-2}_{\langle j\rangle}\}$ is a complete set of coset representatives modulo $\gamma_{\langle j\rangle} R_j$.

(3). It follows from Proposition \ref{proo2}.
\end{proof}

\begin{remark}\label{remm2}
For any integer $1\leq j \leq e$, we have proved that
\begin{equation*}
\begin{aligned}
\phi_j :~~ R_j=R/\gamma^j R ~~&\longrightarrow ~~ R_1=R/\gamma R,\\
                   a + \gamma^j R  ~~~~      & \longmapsto  ~~~~ a+\gamma R,~~~~~~~\forall ~a\in R
\end{aligned}
\end{equation*}
 is a well-defined surjective homomorphism and $\ker(\phi_j)=\gamma_{\langle j\rangle} R_j$.
Then $R_j/\gamma_{\langle j\rangle} R_j \cong R/\gamma R \cong \mathbb F_{q}$.
Thus
\begin{equation*}
\begin{aligned}
\Phi_j : ~~~~~~~~~~~~\overline{R_j}=R_j/\gamma_{\langle j\rangle} R_j ~~~&\longrightarrow~ ~\overline{R}= R/\gamma R,\\
     \overline{a_{\langle j\rangle}}= a_{\langle j\rangle} + \gamma_{\langle j\rangle} R_j   ~   & \longmapsto   ~~~\overline{ a}=   a+\gamma R, ~~~~\forall ~a\in R
\end{aligned}
\end{equation*}
is an isomorphism from the residue field of $R_j$ onto the residue field of $R$.
\end{remark}

\subsection{Codes over Finite Rings}

Let $\mathcal{R}$ be a finite Frobenius ring with identity and $n$ be a positive integer.
We call a nonempty subset $C$ of $\mathcal{R}^n$ a \emph{code} of length $n$ over $\mathcal{R}$ and the ring $\mathcal{R}$ is referred to as the \emph{alphabet} of $C$.
If $C$ is an $\mathcal{R}$-submodule of $\mathcal{R}^n$, then $C$ is said to be \emph{linear}.
The \emph{dual} of $C$ is defined as
$$C^{\bot}=\{\textbf{v}\in \mathcal{R}^n |~\langle \textbf{c},\textbf{v}\rangle=0,~\forall ~\textbf{c}\in C\},$$
 where $\langle \textbf{c},\textbf{v}\rangle$ denotes the usual inner product  of $\textbf{c}$ and $\textbf{v}$.

For any codeword $\textbf{c} = (c_0, c_1,\ldots, c_{n-1})$,
define $\text{supp}(\textbf{c})$ to be the set $\{i~|~c_i \neq 0,~0\leq i\leq n-1\}$.
The\emph{ Hamming weight} of $\textbf{c}$ is the cardinality
of the set $\text{supp}(\textbf{c})$ and is denoted by $\text{wt} (\textbf{c})$.
The \emph{Hamming distance} of two codewords $\textbf{a}, \textbf{b}$ is the number of places where they differ, and is denoted by $d (\textbf{a}, \textbf{b})$.
The \emph{minimum Hamming distance }of a linear code $C$,
denoted by $d(C)$, is the minimum Hamming weight of nonzero codewords of $C$.
If $C$ is a zero code, we let $d(C)=n+1$.

For an element $\lambda$ of $\mathcal{R}$, the $\lambda\, $-constacyclic ($\lambda\,$-twisted) shift $\tau _\lambda$ on $\mathcal{R}^n$ is the shift
\begin{displaymath}
\tau_\lambda (\,x_0,\, x_1,\, \cdots , \,x_{n-1})=(\,\lambda x_{n-1},\, x_0,\, x_1, \,\cdots ,\, x_{n-2}).
\end{displaymath}
Let $\tau_\lambda^0 (\textbf{x})=\textbf{x}$ for any $\textbf{x} \in \mathcal{R}^n$.

\begin{definition}\label{def3}
{\rm Let $\lambda$ be any element of $\mathcal{R}$.
 A linear code $C$ is said to be a \emph{$\lambda\, $-constacyclic code} if $\tau_ \lambda (C)\subseteq C$.
When $\lambda$ is non-invertible,  then we call such $\lambda$-constacyclic codes as non-invertible-element constacyclic codes (NIE-constacyclic codes).
}
\end{definition}

\begin{remark}\label{rema3}
The $1$-constacyclic codes are the \emph{cyclic codes} and the $(-1)$-constacyclic codes are just the \emph{negacyclic codes}.
When $\lambda$ is a unit of $\mathcal{R}$, then $\tau_\lambda$ is a bijective $\mathcal{R}$-linear map and a linear code $C$ is $\lambda~ $-constacyclic if and only if $\tau_ \lambda (C)=C$.
When $\lambda$ is non-invertible in $\mathcal{R}$, $\tau_\lambda$ is an $\mathcal{R}$-linear map which is neither injective nor surjective.
\end{remark}

Let $f(x)$ be a polynomial over $\mathcal{R}$ and let $\deg f(x)$ denote the degree of $f(x)$.
Under the standard $\mathcal{R}$-module isomorphism
\begin{displaymath}
\begin{aligned}
\mathcal{R}^n &\longrightarrow \mathcal{R}[x]/\left \langle  x^n-\lambda\right \rangle, \\
(\,c_0,\,c_1,\,\cdots ,\,c_{n-1}\,)& \longmapsto  c_0+c_1x+\cdots+c_{n-1}x^{n-1}+\left \langle  x^n-\lambda\right \rangle,
\end{aligned}
\end{displaymath}
each codeword $\textbf{c}=(\,c_0,\,c_1,\,\cdots ,\,c_{n-1}\,)$ can be  identified with its polynomial representation
$$c(x)=c_0+c_1x+\cdots+c_{n-1}x^{n-1}\in \mathcal{R}\left [ x \right ],\, \deg\,c(x) \leqslant n-1,$$
and each $\lambda$-constacyclic code $C$ of length $n$ over $\mathcal{R}$ can also be viewed as an ideal of the quotient ring $\mathcal{R}[x]/\left \langle  x^n-\lambda\right \rangle.$
In the light of this, the study of $\lambda$-constacyclic codes of length $n$ over $\mathcal{R}$ is equivalent to the study of ideals of the quotient ring $\mathcal{R}[x]/\left \langle  x^n-\lambda\right \rangle.$

In the following of this section, let the notations be as in Subsection 2.1 and $n$ be a positive integer.
Recall that $R$ is a finite commutative chain ring with the maximal ideal $\gamma R$ and $e$ is the nilpotency index of $\gamma$.

\begin{definition}\label{def1}
{\rm Let $C$ be a code of length $n$ over the finite chain ring $R$.
For $0\leq i \leq e-1$, define
$$Tor_i(C)=\{\overline{\textbf{v}}~ | ~\gamma^i\textbf{v} \in C,~\textbf{v} \in R^n\}.$$
$Tor_i(C)$ is called the $i$th \emph{torsion code} of $C$.
$Tor_0(C) =\overline{C}$ is usually called the \emph{residue code} and sometimes is denoted by $Res(C)$.}
\end{definition}

Clearly, $Tor_i(C)$ is a code of length $n$ over the finite field $\overline{R} \cong \mathbb{F}_{q}$.
It is easy to see that $Tor_0(C)\subseteq Tor_1(C) \subseteq \cdots\subseteq Tor_{e-1}(C)$.

\begin{proposition}\label{proo3}(\cite{re4})
For a linear code $C$ over $R$, we have $|C|=\prod_{i=0}^{k-1}|Tor_i(C)|$.
\end{proposition}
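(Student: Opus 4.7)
The plan is to filter $C$ by powers of the maximal ideal and identify the successive quotients with the torsion codes. For $0 \le i \le e$ set $C_i = C \cap \gamma^i R^n$; this yields a descending chain
$$C = C_0 \supseteq C_1 \supseteq \cdots \supseteq C_{e-1} \supseteq C_e = \{0\},$$
so by multiplicativity of module indices,
$$|C| = \prod_{i=0}^{e-1} |C_i/C_{i+1}|.$$
It then suffices to show $C_i/C_{i+1} \cong Tor_i(C)$ as abelian groups for each $0 \le i \le e-1$.

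To this end, I would define $\psi_i : C_i \to Tor_i(C)$ by $\psi_i(\gamma^i \textbf{v}) = \overline{\textbf{v}}$, where $\textbf{v} \in R^n$ is any lift of a given $c = \gamma^i \textbf{v} \in C_i$. Note $\overline{\textbf{v}} \in Tor_i(C)$ by definition since $\gamma^i \textbf{v} = c \in C$. The main subtle point is well-definedness: if $\gamma^i \textbf{v} = \gamma^i \textbf{v}'$, then $\gamma^i(\textbf{v}-\textbf{v}') = \textbf{0}$, so each coordinate of $\textbf{v} - \textbf{v}'$ lies in the annihilator of $\gamma^i$ in $R$. Because $R$ is a chain ring with nilpotency index $e$, Proposition~\ref{proo1} forces this annihilator to be $\gamma^{e-i}R$; hence $\textbf{v} - \textbf{v}' \in \gamma^{e-i}R^n \subseteq \gamma R^n$ (using $i \le e-1$), so $\overline{\textbf{v}} = \overline{\textbf{v}'}$.

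The remaining verifications are routine: $\psi_i$ is easily seen to be $\overline{R}$-linear and surjective (given $\overline{\textbf{v}} \in Tor_i(C)$, the element $\gamma^i \textbf{v}$ lies in $C_i$ and maps to $\overline{\textbf{v}}$). Its kernel consists of $c = \gamma^i \textbf{v}$ with $\overline{\textbf{v}} = 0$, i.e., $\textbf{v} \in \gamma R^n$, which forces $c \in \gamma^{i+1} R^n$ and hence $c \in C_{i+1}$; conversely every element of $C_{i+1}$ has the form $\gamma^{i+1} \textbf{w} = \gamma^i(\gamma \textbf{w})$ and is killed by $\psi_i$. So $\ker \psi_i = C_{i+1}$, and the First Isomorphism Theorem gives $C_i/C_{i+1} \cong Tor_i(C)$; combined with the telescoping formula above this yields $|C| = \prod_{i=0}^{e-1} |Tor_i(C)|$.

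The main obstacle is the well-definedness of $\psi_i$: this is the one place where the chain-ring hypothesis enters essentially, via the identity $\text{Ann}_R(\gamma^i) = \gamma^{e-i}R$. Once that is in hand, the filtration argument proceeds by a short diagram chase, and the extremal edge cases ($i=0$, $i=e-1$, or $\gamma=0$ with $e=1$) are easily handled inside the same framework.
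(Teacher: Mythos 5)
Your proof is correct. Note first that the paper itself gives no proof of this proposition: it is quoted from Norton and S\u{a}l\u{a}gean \cite{re4}, and the index $k-1$ in the statement is a typo for $e-1$, which is how you (rightly) read it. Your filtration argument is the standard and complete route: the chain $C_i = C\cap\gamma^iR^n$ telescopes the cardinality, and the only genuinely ring-theoretic input is the identification $\mathrm{Ann}_R(\gamma^i)=\gamma^{e-i}R$, which you correctly isolate as the point where the chain-ring hypothesis is used (it follows from Proposition~\ref{proo1} since the annihilator must be one of the ideals $\gamma^jR$ and $\gamma^{i+j}=0$ forces $j\ge e-i$). The well-definedness, surjectivity, and kernel computations for $\psi_i$ all check out, including the edge case $e=1$, $\gamma=0$. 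This is essentially the argument in \cite{re4} (phrased there via generator matrices in standard form, with $|Tor_i(C)|=q^{k_0+\cdots+k_i}$); your quotient formulation is cleaner and equally valid.
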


\begin{lemma}\label{lemm1}
Let $C$ be a linear code over $R$, then for all $j>i$, we have $Tor_i(C)=\Phi_j\Big( Tor_i \big(\mu_j(C)\big)\Big)$.
\end{lemma}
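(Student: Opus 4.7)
The plan is to prove set equality by showing both inclusions directly, using the definitions of $Tor_i$, $\mu_j$, and the isomorphism $\Phi_j\colon\overline{R_j}\to\overline{R}$ from Remark 2.2. The central observation is that $j>i$ means $\gamma^{j-i}\in\gamma R$, so any correction term of the form $\gamma^{j-i}u$ is killed when we pass to the residue field. This is the one place where the hypothesis $j>i$ is actually used.

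For the inclusion $Tor_i(C)\subseteq \Phi_j\bigl(Tor_i(\mu_j(C))\bigr)$, I would start with $\overline{w}\in Tor_i(C)$, pick a representative $v\in R^n$ with $\overline{v}=\overline{w}$ and $\gamma^i v\in C$, and apply $\mu_j$ to obtain $\gamma_{\langle j\rangle}^i v_{\langle j\rangle} = \mu_j(\gamma^i v)\in \mu_j(C)$. This shows $\overline{v_{\langle j\rangle}}\in Tor_i(\mu_j(C))$, and by the definition of $\Phi_j$ we have $\Phi_j(\overline{v_{\langle j\rangle}})=\overline{v}=\overline{w}$. This direction uses no arithmetic beyond chasing the definitions.

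The reverse inclusion $\Phi_j\bigl(Tor_i(\mu_j(C))\bigr)\subseteq Tor_i(C)$ is the step that requires $j>i$. Take $\overline{w}=\Phi_j(\overline{v_{\langle j\rangle}})$ with $v\in R^n$ and $\gamma_{\langle j\rangle}^i v_{\langle j\rangle}\in \mu_j(C)$. Then there exists $c\in C$ with $\mu_j(c)=\mu_j(\gamma^i v)$, equivalently $\gamma^i v - c \in \gamma^j R^n$; write $\gamma^i v - c = \gamma^j u$ for some $u\in R^n$. Now set $v' := v - \gamma^{j-i}u$. A direct computation gives $\gamma^i v' = \gamma^i v - \gamma^j u = c\in C$, so $\overline{v'}\in Tor_i(C)$. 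Because $j-i\geq 1$, we have $\gamma^{j-i}u\in\gamma R^n$, hence $\overline{v'}=\overline{v}$, and by definition of $\Phi_j$ this common value equals $\overline{w}$. Thus $\overline{w}\in Tor_i(C)$.

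The only mildly delicate point — and thus the main obstacle — is the lifting step: one must go from $\mu_j(\gamma^i v)\in\mu_j(C)$ to an \emph{actual} element of $C$ (not just an element of $C+\gamma^j R^n$) that still represents the same class modulo $\gamma$. The trick $v\mapsto v-\gamma^{j-i}u$ absorbs the $\gamma^j$-correction into a change of representative that is invisible in $\overline{R}$, precisely because $j>i$. Once this is set up, both inclusions follow immediately from the definitions, and the lemma is proved.
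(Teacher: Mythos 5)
Your proof is correct and follows essentially the same route as the paper's: both inclusions are handled by chasing definitions, and the key step --- lifting $\mu_j(\gamma^i v)\in\mu_j(C)$ to an actual element of $C$ by replacing $v$ with $v-\gamma^{j-i}u$, a change of representative invisible modulo $\gamma$ because $j>i$ --- is exactly the paper's trick of rewriting $\gamma^i\mathbf{w}+\gamma^j\mathbf{z}$ as $\gamma^i(\mathbf{w}+\gamma^{j-i}\mathbf{z})$. If anything, you make explicit the point $\overline{v-\gamma^{j-i}u}=\overline{v}$ that the paper leaves implicit.
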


\begin{proof}
Let $\textbf{v} \in Tor_i \big(\mu_j(C)\big)$, then there exists $\textbf{w}\in R^n$ such that
$\gamma_{\langle j\rangle}^i \textbf{w}_{\langle j\rangle} \in \mu_j(C)$ and
$\overline{\textbf{w}_{\langle j\rangle}}=\textbf{v}$.
Since $\gamma_{\langle j\rangle}^i \textbf{w}_{\langle j\rangle}=(\gamma^i \textbf{w}) _{\langle j\rangle}\in \mu_j(C)$,
there exists $\textbf{z}\in R^n$ such that $\gamma^i \textbf{w}+\gamma^j\textbf{z}\in C$.
Thus $\gamma^i (\textbf{w}+\gamma^{j-i}\textbf{z})\in C$, which leads $\overline{\textbf{w}}\in Tor_i(C)$.
By Remark \ref{remm2}, $\Phi_j(\textbf{v})=\Phi_j(\overline{\textbf{w}_{\langle j\rangle}})=\overline{\textbf{w}}\in Tor_i(C)$.
Thus, $\Phi_j\Big( Tor_i \big(\mu_j(C)\big)\Big)\subseteq Tor_i(C)$.

Conversely, suppose that $\bm{\nu} \in Tor_i (C)$, then there exists $\bm{\omega}\in R^n$ such that
$\gamma^i \bm{\omega}\in C$ and
$\overline{\bm{\omega}}=\bm{\nu}$.
Then $\mu_j(\gamma^i \bm{\omega})= \gamma_{\langle j\rangle}^i \bm{\omega}_{\langle j\rangle}\in \mu_j(C)$,
which implies that $\overline{\bm{\omega}_{\langle j\rangle}} \in Tor_i \big(\mu_j(C)\big)$.
By Remark \ref{remm2}, $\Phi_j^{-1}(\bm{\nu})=\Phi_j^{-1}(\overline{\bm{\omega}})=\overline{\bm{\omega}_{\langle j\rangle}} \in Tor_i \big(\mu_j(C)\big)$.
This means that $\Phi_j^{-1}\big(  Tor_i(C) \big)\subseteq Tor_i \big(\mu_j(C)\big)$.
It follows that $Tor_i(C)\subseteq \Phi_j\Big( Tor_i \big(\mu_j(C)\big)\Big)$.

As a result, $Tor_i(C)=\Phi_j\Big( Tor_i \big(\mu_j(C)\big)\Big)$.
\end{proof}

\section{Constacyclic Codes over Finite Chain Rings}
Throughout this section, let the notions be as in Section 2 and $n$ be a positive integer.
Let $\lambda$ be non-invertible in $R$,
$S:=R[x]/\langle x^n- \lambda \rangle$ and $S_j:=R_j[x]/\langle x^n- \mu_j(\lambda) \rangle$ for $1\leq j \leq e$.
Then each $\mu_j(\lambda)$-constacyclic code of length $n$ over $R_j$ is an ideal of the quotient ring $S_j$ for $1\leq j \leq e$.
Let $\overline{S}:=\overline{R}[x]/\langle x^n\rangle$.
Note that $\overline{S}=S_1$ and  $S_e=S$.

In this section, we determine the algebraic structures of all NIE-constacyclic codes of length $n$ over the finite chain ring $R$.

\subsection{Units in $S$}
Let $a$ be an element of $S=R[x]/\langle x^n- \lambda \rangle$, then $a$ can be uniquely expressed as
\begin{equation} \label{equu3}
a=a_0+a_1 x+ a_2 x^2+\cdots+a_{n-1}x^{n-1},
\end{equation}
where $a_i \in R$ for $0\leq i \leq n-1$.

It is easy to see that the nilpotency index of $x\in S$ is $N:=ne^{\prime}$, where $e^{\prime}$ is the nilpotency index of $\lambda$.

\begin{theorem}\label{thee2}
Let $a=a_0+a_1 x+ a_2 x^2+\cdots+a_{n-1}x^{n-1} \in S$, where $a_i \in R$ for $0\leq i \leq n-1$.
Then $a$ is a unit of $S$ if and only if $a_0$ is a unit of $R$.
Moreover, if $a$ is a unit of $S$, then $a^{-1}=a_0^{-1} \left(1+\sum_{i=1}^{N-1}A^i \right)$, where $A=-a_0^{-1}\left(a-a_0\right)$.
\end{theorem}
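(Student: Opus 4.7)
The theorem has two parts: the unit characterization (iff statement) and the explicit inverse formula. My plan is to set up once and for all the nilpotency data in $S$, then handle the two implications separately, using the geometric series for the inverse and a canonical quotient to the residue ring for the necessity.

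First I would record the basic nilpotency facts. Since $\lambda$ is non-invertible in the chain ring $R$, it lies in the maximal ideal $\gamma R$ and is therefore nilpotent of some index $e'\le e$. In $S$ the relation $x^n=\lambda$ gives $x^{ne'}=\lambda^{e'}=0$, while $x^{n(e'-1)}=\lambda^{e'-1}\ne0$ (it is a nonzero element of $R\subset S$), and any degree in between is nonzero by the same reason together with degree considerations in the canonical residue representatives mod $x^n-\lambda$. Thus $x$ has nilpotency index exactly $N=ne'$ in $S$, as stated in the paragraph preceding the theorem. In particular the principal ideal $xS$ consists of nilpotent elements with $(xS)^N=0$.

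For the sufficiency and the inverse formula, assume $a_0$ is a unit of $R$. Writing $a=a_0+(a-a_0)$ and factoring, one has $a=a_0(1-A)$ with $A=-a_0^{-1}(a-a_0)$. Since $a-a_0=a_1x+\cdots+a_{n-1}x^{n-1}\in xS$, also $A\in xS$, so $A^N=0$. Then the standard geometric-series identity
\begin{equation*}
(1-A)\left(1+\sum_{i=1}^{N-1}A^i\right)=1-A^N=1
\end{equation*}
yields $(1-A)^{-1}=1+\sum_{i=1}^{N-1}A^i$, and multiplying by $a_0^{-1}$ gives the claimed formula for $a^{-1}$.

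For the necessity, the clean route is to push forward through the canonical surjective ring homomorphism $\pi\colon S\to\overline{R}[x]/\langle x^n-\overline{\lambda}\rangle$ induced by $R\to\overline{R}$. Because $\lambda\in\gamma R$ we have $\overline{\lambda}=0$, so the target is $\overline{S}=\overline{R}[x]/\langle x^n\rangle$. A ring homomorphism sends units to units, so if $a$ is a unit of $S$ then $\pi(a)=\overline{a}_0+\overline{a}_1x+\cdots+\overline{a}_{n-1}x^{n-1}$ is a unit of $\overline{S}$. But $\overline{S}$ is a local ring whose unique maximal ideal is $\langle x\rangle$ (quotient is the field $\overline{R}$, and $x$ is nilpotent), so a unit is precisely an element with nonzero constant term. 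Therefore $\overline{a}_0\ne0$ in $\overline{R}$, i.e.\ $a_0\notin\gamma R$, which by Proposition~\ref{proo2}(3) means $a_0$ is a unit of $R$.

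There is no serious obstacle; the only points requiring a little care are justifying that the nilpotency index of $x$ is exactly $N=ne'$ (rather than just at most $N$), and being explicit that $\overline{\lambda}=0$ so the residue ring is the truncated polynomial ring $\overline{R}[x]/\langle x^n\rangle$, which is what makes the unit characterization in $\overline{S}$ trivial and forces $a_0$ to be a unit.
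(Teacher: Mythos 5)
Your proof is correct, and the sufficiency half (the geometric series $a=a_0(1-A)$, $A\in xS$, $A^N=0$) is essentially identical to the paper's. Where you genuinely diverge is the necessity direction: the paper expands the product $ab=1$ coefficient-by-coefficient in $S$, reads off the constant-term relation $a_0b_0+\lambda\bigl(\sum_{i=0}^{n}a_ib_{n-i}\bigr)=1$, and concludes $a_0b_0$ is a unit because $\lambda$ is nilpotent; you instead push $a$ through the canonical surjection $S\to\overline{R}[x]/\langle x^n\rangle$ (using $\overline{\lambda}=0$) and invoke the local structure of the truncated polynomial ring over the residue field. Your route is more conceptual and avoids the index-juggling of the convolution computation, while the paper's is entirely self-contained at that point in the text. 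The one thing to be careful about in your version is that the fact you quote about $\overline{S}$ --- units are exactly the elements with nonzero constant term --- is precisely the $e=1$ instance of the theorem being proved (it reappears as Corollary~\ref{cor1}(1)), so you should either prove it inline (nonzero constant term implies unit by the same geometric series; zero constant term implies the element lies in $\langle x\rangle$, hence is nilpotent, hence is not a unit) or cite it as standard, to avoid any appearance of circularity. Your remark that the nilpotency index of $x$ is \emph{exactly} $N=ne'$ is true but unnecessary for the argument: only $A^N=0$ is used, so an upper bound suffices.
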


\begin{proof}
Suppose that $a$ is a unit of $S$, then there exists $b=b_0+b_1 x+ b_2 x^2+\cdots+b_{n-1}x^{n-1} \in S$ such that $ab=1$.
For convenience, let $a_i=b_i=0$ for $i\geq n$.
Thus
\begin{equation*}
\begin{split}
ab=& \sum_{k=0}^{2n-1}\left( \sum_{i=0}^{k} a_ib_{k-i} \right)x^k \\
    = &\sum_{k=0}^{n-1}\left( \sum_{i=0}^{k} a_ib_{k-i} \right)x^k + \sum_{k=n}^{2n-1}\left( \sum_{i=0}^{k} a_ib_{k-i} \right)x^{n+\left(k-n \right)}\\
    = &\sum_{k=0}^{n-1}\left( \sum_{i=0}^{k} a_ib_{k-i} \right)x^k + \lambda\sum_{k=0}^{n-1}\left( \sum_{i=0}^{k+n} a_ib_{k+n-i} \right)x^{k}\\
     = &\sum_{k=0}^{n-1} \left[  \left( \sum_{i=0}^{k} a_ib_{k-i} \right) + \lambda\left( \sum_{i=0}^{k+n} a_ib_{k+n-i} \right)  \right] x^{k}=1.
\end{split}
\end{equation*}
So $a_0b_0+ \lambda\left( \sum_{i=0}^{n} a_ib_{n-i} \right) =1$ in $R$.
Since $\lambda$ is nilpotent in  $R$, $a_0b_0=1-\lambda\left( \sum_{i=0}^{n} a_ib_{n-i} \right)$ is a unit of  $R$,
which yields that $a_0$ is a unit of $R$.

Conversely, suppose that $a_0$ is a unit of $R$ and let
$$A=-a_0^{-1}(a-a_0)=x\left[-a_0^{-1}(a_1+ a_2 x+\cdots+a_{n-1}x^{n-2})\right].$$
Then $A-1=-a_0^{-1}a$ and $A^N=0$.
Thus $$-1=A^N-1=\left( A-1\right) \left( 1+\sum_{i=1}^{N-1}A^i \right)= -a_0^{-1}\left( 1+\sum_{i=1}^{N-1}A^i \right)a$$ in $S$.
So we have $a_0^{-1}\left( 1+\sum_{i=1}^{N-1}A^i \right)a=1$.
This gives that $a$ is a unit of $S$ and $a^{-1}=a_0^{-1} \left(1+\sum_{i=1}^{N-1}A^i \right)$.
\end{proof}

\begin{theorem}\label{thee3}
Let $a \in S$.
Then $a$ is non-invertible in $S$ if and only if $a\in \langle \gamma, x\rangle$.
$\langle \gamma, x\rangle$ is the unique maximal ideal of $S$ and $S/\langle \gamma, x\rangle \cong \mathbb{F}_{q}$.
Moreover, $\langle \gamma, x\rangle$ is a principal ideal of $S$ if and only if one of the following holds:
\begin{description}
  \item[(i)] $e=1$,
  \item[(ii)] $n=1$,
  \item[(iii)] $e>1$, $n>1$ and $\lambda \in \gamma R\backslash \gamma^2 R$.
\end{description}
\end{theorem}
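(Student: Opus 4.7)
To prove parts (1) and (2), I would invoke Theorem~\ref{thee2}: an element $a = a_0 + a_1 x + \cdots + a_{n-1}x^{n-1}$ of $S$ is a unit iff $a_0$ is a unit of $R$, which in the chain ring $R$ is equivalent to $a_0 \notin \gamma R$. A direct inspection shows that $a \in \langle \gamma, x\rangle$ exactly when $a_0 \in \gamma R$ (every element of $\gamma S$ has all coefficients in $\gamma R$, and every element of $xS$ has constant term lying in $\lambda R \subseteq \gamma R$ because $\lambda$ is non-invertible). This proves (1), and then $\langle \gamma, x\rangle$, coinciding with the set of non-units, is the unique maximal ideal of $S$. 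Since $x^n - \lambda \in \langle \gamma, x\rangle$ (as $\lambda \in \gamma R$), the third isomorphism theorem gives
\[
S/\langle \gamma, x\rangle \;\cong\; R[x]/\langle \gamma, x\rangle \;\cong\; R/\gamma R \;\cong\; \mathbb{F}_q.
\]

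For the sufficiency in part (3), case (i) gives $\gamma = 0$ directly, so $\langle \gamma, x\rangle = \langle x\rangle$; case (ii) identifies $S$ with $R$ via $x \mapsto \lambda$, under which $\langle \gamma, x\rangle$ maps to $\langle \gamma, \lambda\rangle = \langle \gamma\rangle$ because $\lambda \in \gamma R$; in case (iii), the hypothesis $\lambda \in \gamma R \setminus \gamma^2 R$ lets one write $\lambda = \gamma u$ with $u$ a unit of $R$, so $\gamma = u^{-1} x^n \in \langle x\rangle$ and hence $\langle \gamma, x\rangle = \langle x\rangle$.

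For the necessity I assume $e>1$, $n>1$, and $\lambda \in \gamma^2 R$ (the negation of (i)--(iii) combined with $\lambda$ being non-invertible) and suppose for contradiction that $\langle \gamma, x\rangle = \langle f\rangle$. Writing $\gamma = fg$ for some $g \in S$, I first argue that $g$ must be a non-unit: otherwise $\langle f\rangle = \langle \gamma\rangle$, and $x \in \langle \gamma\rangle$ would give $x = \gamma s$ for some $s = \sum s_i x^i$, so comparing coefficients of $x$ (meaningful because $n>1$) would force $\gamma s_1 = 1$, violating the non-invertibility of $\gamma$. Thus $f_0, g_0 \in \gamma R$, and computing the constant term of $fg = \gamma$ in $S$ via $x^n = \lambda$ yields
\[
\gamma \;=\; f_0 g_0 + \lambda \sum_{i+j=n} f_i g_j.
\]
Now $f_0 g_0 \in \gamma^2 R$ and $\lambda \in \gamma^2 R$ together place the right-hand side in $\gamma^2 R$, forcing $\gamma \in \gamma^2 R$ and contradicting the strict chain in Proposition~\ref{proo1}. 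The main difficulty is controlling the ``wrap-around'' contribution $\lambda \sum_{i+j=n} f_i g_j$ created by reducing modulo $x^n - \lambda$; the hypothesis $\lambda \in \gamma^2 R$ is precisely what renders this correction term negligible modulo $\gamma^2 R$, which is why it is the dividing line between the principal and non-principal cases.
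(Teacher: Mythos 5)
Your argument is correct, and for most of the theorem it runs along the same lines as the paper: both proofs reduce invertibility to the constant term via Theorem~\ref{thee2} (you characterize $\langle\gamma,x\rangle$ as $\{a : a_0\in\gamma R\}$ directly, the paper instead checks nilpotency of $\gamma b+xc$ for the converse inclusion), and the three sufficiency cases are handled identically. Where you genuinely diverge is the necessity step when $e>1$, $n>1$, $\lambda\in\gamma^2R$. The paper uses \emph{both} membership relations: from $ab=\gamma$ it deduces that the generator's constant term $a_0$ lies in $\gamma R\setminus\gamma^2R$, and then from $ac=x$ it compares the coefficients of $x^0$ and $x^1$ to manufacture a unit of the form $a_0c_1$ with $a_0=\gamma u_0$, a contradiction. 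You instead use only $\gamma=fg$, supplemented by the observation that the cofactor $g$ must be a non-unit (since $x\notin\langle\gamma\rangle$ when $n>1$), so that $f_0g_0\in\gamma^2R$ and the single constant-term identity $\gamma=f_0g_0+\lambda\sum_{i+j=n}f_ig_j$ forces $\gamma\in\gamma^2R$, contradicting $e>1$. Your route is shorter and avoids the second coefficient comparison entirely, at the price of the small auxiliary lemma that $\langle f\rangle=\langle\gamma\rangle$ is impossible; the paper's version, while longer, extracts the sharper intermediate fact that any generator would have to satisfy $a_0\in\gamma R\setminus\gamma^2 R$, which is the kind of normal-form information it reuses in the structural analysis of Section~3.2. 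Both arguments hinge on the same mechanism: $\lambda\in\gamma^2R$ makes the wrap-around term negligible modulo $\gamma^2R$.
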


\begin{proof}
Suppose that $a$ is non-invertible in $S$ and write $a=a_0+a_1 x+ a_2 x^2+\cdots+a_{n-1}x^{n-1}$, where $a_i \in R$ for $0\leq i \leq n-1$.
Then by Theorem \ref{thee2}, we have $a_0 \in R$ is non-invertible.
This means that $a_0\in \gamma R$.
Thus $a\in \langle \gamma, x\rangle$.
Conversely, suppose that $a\in \langle \gamma,x\rangle$, then $a=\gamma b + xc$ for some $b,~c\in S$.
Thus $a^{e+N}=0$, implying that $a$ is non-invertible in $S$.
So we have $\langle \gamma,x\rangle=\{\text{non-invertible elements in } S\}$.
As a result, $\langle \gamma,x\rangle$ is the unique maximal ideal of $S$ and $S/\langle \gamma,x\rangle \cong \mathbb{F}_{q}$.

If $e=1$, then $\gamma=0$ and $R\cong\mathbb{F}_{q}$.
Thus $\langle \gamma,x\rangle=\langle x\rangle$ is a principal ideal.

If $n=1$, then $x=\lambda \in\langle \gamma \rangle$.
Hence $\langle \gamma,x\rangle=\langle\gamma\rangle$ is a principal ideal.

If $e>1$, $n>1$ and $\lambda \in \gamma R\backslash \gamma^2 R$, then $\lambda=\gamma u$ for some unit $u\in R$.
Hence, $\gamma=u^{-1}\lambda=u^{-1}x^n \in \langle x\rangle$.
This means that $\langle \gamma,x\rangle=\langle x\rangle$ is a principal ideal.

If $e>1$, $n>1$ and $\lambda \in \gamma^2 R$ and suppose that $\langle \gamma,x\rangle$ is a principal ideal, then $\langle \gamma,x\rangle= \langle a \rangle$ for some $a \in \langle \gamma,x\rangle$.
So $a$ can be written as $a=a_0+a_1 x+ a_2 x^2+\cdots+a_{n-1}x^{n-1}$,
where $a_0\in \gamma R$ and $a_i \in R$ for $1\leq i \leq n-1$.
Since $\gamma \in \langle a \rangle$, there exists $b=b_0+b_1 x+ b_2 x^2+\cdots+b_{n-1}x^{n-1} \in S$ such that $ab=\gamma$.
For convenience, let $a_i=b_i=0$ for $i\geq n$.
Note that
$$ab=\sum_{k=0}^{n-1} \left[  \left( \sum_{i=0}^{k} a_ib_{k-i} \right) + \lambda\left( \sum_{i=0}^{k+n} a_ib_{k+n-i} \right)  \right] x^{k}=\gamma$$
and we have $$a_0b_0+ \lambda\left( \sum_{i=0}^{n} a_ib_{n-i} \right) =\gamma.$$
If $a_0 \in \gamma^2 R$, by $\lambda \in \gamma^2 R$, then $\gamma=a_0b_0+ \lambda\left( \sum_{i=0}^{n} a_ib_{n-i} \right) \in \gamma^2 R$, which is a contradiction.
Thus $a_0 \notin \gamma^2 R$, which yields that $a_0=\gamma u_0$ for some unit $u_0$ of $R$.
Since $x \in \langle a \rangle$, there exists $c=c_0+c_1 x+ c_2 x^2+\cdots+c_{n-1}x^{n-1} \in S$ such that $ac=x$.
For convenience, let $a_i=c_i=0$ for $i\geq n$.
Note that $$ac=\sum_{k=0}^{n-1} \left[  \left( \sum_{i=0}^{k} a_ic_{k-i} \right) + \lambda\left( \sum_{i=0}^{k+n} a_ic_{k+n-i} \right)  \right] x^{k}=x$$
and we have $$a_0c_0+ \lambda\left( \sum_{i=0}^{n} a_ic_{n-i} \right) =0,~a_0c_1+a_1c_0+ \lambda\left( \sum_{i=0}^{n+1} a_ic_{n+1-i} \right)=1.$$
From $a_0c_0+ \lambda\left( \sum_{i=0}^{n} a_ic_{n-i} \right) =0$, we get $a_0c_0=-\lambda\left( \sum_{i=0}^{n} a_ic_{n-i} \right) \in \gamma^2 R$.
By $a_0=\gamma u_0$, where $u_0$ is a unit of $R$, $c_0\in \gamma R$.
From $a_0c_1+a_1c_0+ \lambda\left( \sum_{i=0}^{n+1} a_ic_{n+1-i} \right)=1$, we get $a_0c_1=1-a_1c_0-\lambda\left( \sum_{i=0}^{n+1} a_ic_{n+1-i} \right)$.
Since $\lambda,~ c_0\in \gamma R$, $a_0c_1$ is invertible in $S$, a contradiction.
As a result, $\langle \gamma,x\rangle$ is a non-principal ideal.
\end{proof}

\begin{corollary}\label{cor1}
\begin{itemize}
  \item[(1)]  If $e=1$, then $S\cong\mathbb{F}_{q}[x]/\langle x^n \rangle$ is a finite chain ring with the maximal ideal $\langle x\rangle$ whose nilpotency index is $n$ and
      $$\{0\}=\langle x^n\rangle\subsetneqq  \langle x^{n-1}\rangle \subsetneqq \cdots \subsetneqq \langle x\rangle \subsetneqq \langle x^0\rangle=S$$ are all ideals of $S$.
      For $0\leq i \leq n$, $|\langle x^i\rangle|=q^{n-i}$.
 \item[(2)]  If $n=1$, then $S\cong R$ is a finite chain ring with the maximal ideal $\langle \gamma\rangle$ whose nilpotency index is $e$ and
      $$\{0\}=\langle \gamma^e\rangle\subsetneqq  \langle \gamma^{e-1}\rangle \subsetneqq \cdots \subsetneqq \langle \gamma\rangle \subsetneqq \langle \gamma^0\rangle=S$$ are all ideals of $S$.
      For $0\leq i \leq e$, $|\langle \gamma^i\rangle|=q^{e-i}$.
  \item[(3)]  If  $e>1$, $n>1$  and $\lambda \in \gamma R\backslash\gamma^2 R$, then $S$  is a finite chain ring with the maximal ideal $\langle x\rangle$ whose nilpotency index is $ne$ and
      $$\{0\}=\langle x^{ne}\rangle\subsetneqq  \langle x^{ne-1}\rangle \subsetneqq \cdots \subsetneqq \langle x\rangle \subsetneqq \langle x^0\rangle=S$$ are all ideals of $S$.
      For $0\leq i \leq ne$, $|\langle x^i\rangle|=q^{ne-i}$.
  \item[(4)] If $e>1$, $n>1$ and $\lambda \in \gamma^2 R$, then $S$  is a finite local ring with the maximal ideal $\langle \gamma,~x\rangle$ but not a chain ring.
\end{itemize}
\end{corollary}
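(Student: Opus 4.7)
The strategy is to let Theorem 3.2 (which identifies $\langle \gamma, x\rangle$ as the unique maximal ideal of $S$ and tells us exactly when it is principal) do the structural work, and then read off each of the four assertions by applying Proposition 2.1, which states that a finite local ring is a chain ring if and only if its maximal ideal is principal, and pins down its entire ideal lattice once it is. The four cases of the corollary are precisely the three principal-ideal cases and the one non-principal case of Theorem 3.2.

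For the degenerate cases I would first simplify $S$. In case (1), $e=1$ forces $\gamma=0$ and $R\cong\mathbb{F}_q$, so the non-invertible element $\lambda$ must be $0$, giving $S\cong \mathbb{F}_q[x]/\langle x^n\rangle$; this is a textbook chain ring with maximal ideal $\langle x\rangle$ of nilpotency index $n$, and the list of ideals as well as the cardinalities $|\langle x^i\rangle|=q^{n-i}$ follow from Proposition 2.1. In case (2), $x^n-\lambda=x-\lambda$ and evaluation at $\lambda$ gives the canonical isomorphism $S\cong R$, so the statement reduces word-for-word to Proposition 2.1 applied to $R$.

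For case (3), Theorem 3.2 has already shown that $\langle \gamma, x\rangle=\langle x\rangle$; hence the maximal ideal of $S$ is principal and, by Proposition 2.1, $S$ is a chain ring with residue field $S/\langle x\rangle\cong\mathbb{F}_q$ and complete ideal chain $\langle x^0\rangle\supsetneqq\langle x\rangle\supsetneqq\cdots\supsetneqq\langle x^N\rangle=\{0\}$, where $N$ is the nilpotency index of $x$ in $S$. The remark immediately preceding Theorem 3.2 states $N=ne'$ with $e'$ the nilpotency index of $\lambda$; since $\lambda=\gamma u$ with $u$ a unit forces $e'=e$, I would conclude $N=ne$, and the cardinalities $|\langle x^i\rangle|=q^{ne-i}$ drop out of Proposition 2.1. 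For case (4), Theorem 3.2 has already shown $\langle \gamma, x\rangle$ is not principal, so $S$ is still local but, by the contrapositive of Proposition 2.1, fails to be a chain ring, which is all assertion (4) makes.

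The only real conceptual difficulty hidden in the corollary sits inside Theorem 3.2 itself, namely the proof of non-principality in case (4); once Theorem 3.2 is granted, every clause of the corollary is a one-step invocation of Proposition 2.1 together with the nilpotency-index remark. I therefore expect no substantive obstacle in writing up the corollary proper: the main drafting care is simply to quote Theorem 3.2 in the right form in each case and to verify the residue-field identification $S/\langle x\rangle\cong \mathbb{F}_q$ (using $\gamma=u^{-1}x^n\in\langle x\rangle$) before applying Proposition 2.1.
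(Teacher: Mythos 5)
Your proposal is correct and follows exactly the route the paper intends: the paper gives no separate proof of this corollary, treating it as an immediate consequence of Theorem \ref{thee3} (which determines when the maximal ideal $\langle\gamma,x\rangle$ is principal) combined with Proposition \ref{proo1} and the observation that the nilpotency index of $x$ is $ne'$. Your handling of the degenerate cases and the identification $e'=e$ when $\lambda\in\gamma R\setminus\gamma^2R$ are exactly the details one needs to fill in.
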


\subsection{Ideals of $S$}
Now we aim to determine the algebraic structures of all $\lambda$-constacyclic codes of length $n$ over $R$ and find a unique representation of ideals in $S:=R[x]/\langle x^n- \lambda \rangle$.

Let $C$ be a $\lambda$-constacyclic code of length $n$ over  $R$, i.e.,  $C$ is an ideal of $S$.
Then for $0\leq i \leq e-1$, $Tor_i(C)$ is an ideal of $\overline{S}=\overline{R}[x]/\langle x^n \rangle\cong \mathbb{F}_{q}[x]/\langle x^n \rangle$, which means that $Tor_i(C)$ is a $0$-constacyclic code of length $n$ over the finite field $\overline{R} \cong \mathbb{F}_{q}$.
By Corollary \ref{cor1},  $Tor_i(C)=\langle x^{T_i} \rangle$ for some $0\leq T_i \leq n$, we say $T_i$ is the \emph{$i$th-torsional degree} of $C$.

Then we can obtain the following result by Definition \ref{def1}, Proposition \ref{proo3} and Corollary \ref{cor1}.

\begin{theorem}\label{thee4}
Let $C$ be an ideal of $S$ and $Tor_i(C)=\langle x^{T_i} \rangle$ for some $0\leq T_i \leq n$.
Then
\begin{description}
  \item[(1)] $|Tor_i(C)|=q^{n-T_i}$.
  \item[(2)] If $g(x) \in S$ and $\gamma^i  \left( x^{t_i}+\gamma g(x) \right) \in C$, then $t_i \geq T_i$.
  \item[(3)] $n\geq T_0 \geq T_1 \geq \cdots \geq T_{e-1} \geq 0$.
  \item[(4)] $|C|=q ^{en- \left( T_0 + T_1 +\cdots + T_{e-1} \right)}$.
\end{description}
\end{theorem}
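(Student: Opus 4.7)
The plan is to handle the four claims in the order they are stated, since each later part rests on the earlier ones and on results already proved in the excerpt. Throughout I will identify $Tor_i(C)$ with an ideal of $\overline{S}=\overline{R}[x]/\langle x^n\rangle\cong \mathbb{F}_q[x]/\langle x^n\rangle$, so that the hypothesis $Tor_i(C)=\langle x^{T_i}\rangle$ makes sense via Corollary~\ref{cor1}(1) applied to $\overline{S}$ (the $e=1$ case of that corollary is precisely what classifies the ideals of $\overline{S}$).

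For part (1), I would simply quote Corollary~\ref{cor1}(1) applied to $\overline{S}$: the ideal $\langle x^{T_i}\rangle$ of $\overline{R}[x]/\langle x^n\rangle$ has cardinality $q^{n-T_i}$, giving $|Tor_i(C)|=q^{n-T_i}$. For part (2), suppose $\gamma^i\bigl(x^{t_i}+\gamma g(x)\bigr)\in C$ with $g(x)\in S$. By definition of $Tor_i(C)$, the reduction $\overline{x^{t_i}+\gamma g(x)}$ lies in $Tor_i(C)=\langle x^{T_i}\rangle$. Since $\gamma g(x)$ maps to $0$ in $\overline{S}$, and since $\lambda\in\gamma R$ forces $\overline{\lambda}=0$, the image of $x^{t_i}$ in $\overline{S}=\overline{R}[x]/\langle x^n\rangle$ is $x^{t_i}$ if $t_i<n$ and $0$ if $t_i\geq n$; in the latter case there is nothing to prove, while in the former case $x^{t_i}\in\langle x^{T_i}\rangle$ in $\overline{R}[x]/\langle x^n\rangle$ forces $t_i\geq T_i$ (again by the chain-ring description of $\overline{S}$ from Corollary~\ref{cor1}(1)).

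For part (3), the bounds $0\leq T_i\leq n$ hold by hypothesis, so it remains to establish the monotonicity $T_0\geq T_1\geq\cdots\geq T_{e-1}$. This is immediate from the inclusions $Tor_0(C)\subseteq Tor_1(C)\subseteq\cdots\subseteq Tor_{e-1}(C)$ recorded right after Definition~\ref{def1}: in the chain ring $\overline{S}$ the ideals $\langle x^T\rangle$ are nested in reverse of $T$, so $\langle x^{T_i}\rangle\subseteq\langle x^{T_{i+1}}\rangle$ translates to $T_i\geq T_{i+1}$. Part (4) then follows by combining Proposition~\ref{proo3} (with the convention that the $k$ there is the nilpotency index $e$ of $\gamma$) with part~(1):
\begin{equation*}
|C|=\prod_{i=0}^{e-1}|Tor_i(C)|=\prod_{i=0}^{e-1}q^{\,n-T_i}=q^{\,en-(T_0+T_1+\cdots+T_{e-1})}.
\end{equation*}

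None of the four parts looks genuinely difficult; if there is a subtle point, it is the small bookkeeping in part~(2) to make sure the reduction modulo $\gamma$ really does kill the $\gamma g(x)$ term and to handle the degenerate range $t_i\geq n$ (where the reduction of $x^{t_i}$ vanishes because $\overline{\lambda}=0$, reflecting that $\lambda$ is non-invertible). Everything else is a direct consequence of: the classification of ideals of $\overline{S}$ from Corollary~\ref{cor1}(1); the chain of torsion codes noted after Definition~\ref{def1}; and the cardinality formula of Proposition~\ref{proo3}.
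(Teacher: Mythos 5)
Your proposal is correct and follows essentially the same route as the paper, which states that the theorem follows from Definition~\ref{def1}, Proposition~\ref{proo3} and Corollary~\ref{cor1} without further elaboration; your write-up is simply the fleshed-out version of that citation (ideal classification and cardinalities in $\overline{S}$ for (1)--(3), the product formula of Proposition~\ref{proo3} for (4)). The only point worth noting is that in part (2) the case $t_i\geq n$ is not quite ``nothing to prove'' but rather follows trivially from $T_i\leq n\leq t_i$, which is what you evidently intend.
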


Let $\mathcal{T}_{R}[x]$ be the set of polynomials in $R[x]$  with coefficients belonging to  $\mathcal{T}_{R}$.
Let $a(x)=\sum_{i=0}^{n-1} a_i x^i\in S$, where $a_i \in R$ for $0 \leq i \leq n-1$.
By Proposition \ref{proo2}, for any $0 \leq i \leq n-1$,
 $a_i\in R$ can be uniquely expressed as
$a_i=a_{0,i}+\gamma a_{1,i}+\gamma^2 a_{2,i}+\cdots+\gamma^{e-1}a_{e-1,i}=\sum_{j=0}^{e-1}\gamma^j a_{j,i}$,
where $a_{j,i}\in \mathcal{T}_{R}$.
Thus, $a(x)=\sum_{j=0}^{e-1}\gamma^j \left(  \sum_{i=0}^{n-1} a_{j,i} x^i \right)$.
For any $0 \leq j \leq e-1$, if $  \sum_{i=0}^{n-1} a_{j,i} x^i = 0$, then let $h_j(x)=0$ and $t_{j}=n-1$, we have $\sum_{i=0}^{n-1} a_{j,i} x^i =x^{t_j} h_j(x)$.
If $  \sum_{i=0}^{n-1} a_{j,i} x^i \neq 0$,
then let $0 \leq t_{j} \leq n-1$ be the smallest integer such that $a_{j,t_{j}}\neq 0$ and  $h_j(x)=\sum_{i=0}^{n-1-t_j} a_{j,i+t_j} x^i$.
Clearly, $h_j(x) \in \mathcal{T}_{R}[x]$ is a unit of $S$ and $\sum_{i=0}^{n-1} a_{j,i} x^i=x^{t_j}\sum_{i=0}^{n-1-t_j} a_{j,i+t_j} x^i=x^{t_j}h_j(x)$.
Thus for any $0 \leq j \leq e-1$, there exists $0 \leq t_{j} \leq n-1$ such that $\sum_{i=0}^{n-1} a_{j,i} x^i =x^{t_j}h_j(x)$, where $h_j(x) \in \mathcal{T}_{R}[x]$ is $0$ or a unit of $S$.
As a result, any polynomial $a(x)\in S$ can be expressed as
\begin{equation} \label{equu4}
a(x)=\sum_{j=0}^{e-1}\gamma^j x^{t_j}h_j(x),
\end{equation}
where $0 \leq t_{j} \leq n-1$  and $h_j(x) \in \mathcal{T}_{R}[x]$ is either zero or a unit of $S$.

It is easy to get the following lemma.
\begin{lemma}\label{lemm2}
\begin{description}
  \item[(1)] For any $1\leq j \leq e$, $\mu_j$ is a bijective map from $\mathcal{T}_{R}[x]$ to $\mathcal{T}_{R_j}[x]$.
  \item[(2)] For any $r(x) \in \mathcal{T}_{R}[x]$, $r(x) = 0$ in $S$ if and only if $\mu_j \big(  r(x) \big)=0$ in $S_j$.
  \item[(3)] For $r(x) \in \mathcal{T}_{R}[x]$ of degree $\leq n-1$, suppose $r(x)\neq 0$ in $S$ and
  we can write $r(x)=x^{n_1}r_1(x)$, $ \mu_j \big(  r(x) \big)=x^{n_2}r_2(x)$ where $0 \leq n_1, n_2 \leq n-1$, $r_1(x)  \in \mathcal{T}_{R}[x]$ is a unit of $S$,
  $r_2(x)  \in \mathcal{T}_{R_j}[x]$ is a unit of $S_j$.
  Then $n_1=n_2$ and $\deg \big( r_1(x) \big) =\deg \big( r_2(x) \big)$.
\end{description}
\end{lemma}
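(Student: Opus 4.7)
The plan is to reduce every statement to the level of coefficients in $\mathcal{T}_R$ and then exploit the fact that a nonzero Teichmüller element of $R$ is automatically a unit (by Proposition \ref{proo2}(3)), so that $\mu_j$ preserves being nonzero on $\mathcal{T}_R$. For (1), since $\mu_j$ acts coefficientwise on polynomials, I would first reduce to showing that $\mu_j$ restricts to a bijection $\mathcal{T}_R\to\mathcal{T}_{R_j}$. Theorem \ref{thee1}(2) tells us that $\mu_j(\zeta)=\zeta_{\langle j\rangle}$ has multiplicative order $q-1$, so the images $\mu_j(\zeta^k)=\zeta_{\langle j\rangle}^k$ for $0\leq k\leq q-2$ are pairwise distinct; together with $\mu_j(0)=0$ this gives the required bijection on Teichmüller sets, and extending coefficientwise yields the bijection $\mathcal{T}_R[x]\to\mathcal{T}_{R_j}[x]$.

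For (2), I would first reduce to the case $\deg r(x)\leq n-1$ (which is the only case used later in the paper): since $x^n-\lambda$ is monic, every residue class in $S$ has a unique degree-$<n$ representative, so the condition ``$r(x)=0$ in $S$'' is equivalent to the coefficients of $r(x)$ all vanishing in $R$, and likewise in $S_j$. The statement then reduces to the coefficient-level claim: for each $r_i\in\mathcal{T}_R$, $r_i=0$ if and only if $\mu_j(r_i)=0$. If $r_i\neq 0$ then $r_i$ is a unit of $R$ by Proposition \ref{proo2}(3), so $\mu_j(r_i)$ is a unit of $R_j$ and hence nonzero; the converse is immediate.

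For (3), I would write $r(x)=\sum_{i=0}^{n-1} r_i x^i$ with $r_i\in\mathcal{T}_R$, let $n_1$ be the smallest index with $r_{n_1}\neq 0$, and set $r_1(x)=\sum_{i\geq n_1} r_i x^{i-n_1}$, so that $r(x)=x^{n_1}r_1(x)$ in $S$. The constant term of $r_1(x)$ is $r_{n_1}\in\mathcal{T}_R\setminus\{0\}$, which is a unit of $R$, so by Theorem \ref{thee2} $r_1(x)$ is a unit of $S$. Applying $\mu_j$ and using (2) coefficient by coefficient, $n_1$ is also the smallest index at which the Teichmüller coefficients of $\mu_j(r(x))$ become nonzero, which by the uniqueness of such a decomposition in $S_j$ forces $n_2=n_1$ and $r_2(x)=\mu_j(r_1(x))$. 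Finally, the leading coefficient of $r_1(x)$ is another nonzero Teichmüller element, hence a unit, so $\mu_j$ does not drop the degree and $\deg r_2(x)=\deg r_1(x)$.

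The main obstacle, such as it is, lies in the bookkeeping for (2): one has to justify carefully that zero-ness in the quotient ring $S$ can be read off from the degree-$<n$ representative, and that this representative still has Teichmüller coefficients when $r(x)$ already did. Once this reduction to the ``polynomial'' level is in place, all three parts follow from the single observation that $\mu_j$ is injective on $\mathcal{T}_R$ and preserves the unit status of the leading and trailing Teichmüller coefficients.
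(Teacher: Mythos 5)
The paper offers no proof of this lemma (it is introduced with ``It is easy to get the following lemma''), and your argument is precisely the intended one: everything reduces to the observation that a nonzero element of $\mathcal{T}_R$ is a unit of $R$, so the surjective homomorphism $\mu_j$ preserves both vanishing and non-vanishing of Teichm\"uller coefficients, which settles (1), (2) and the equality of the trailing and leading indices in (3). Your restriction of (2) to representatives of degree at most $n-1$ is not merely convenient but necessary --- for instance with $R=\mathbb{Z}_4$, $\gamma=2$, $\lambda=2$, $n=1$ and $r(x)=x\in\mathcal{T}_R[x]$ one has $r(x)=2\neq 0$ in $S$ while $\mu_1\big(r(x)\big)=0$ in $S_1$ --- and since that is the only case the paper ever uses, your proof is correct as written.
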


\begin{theorem}\label{thee5}
Let $C$ be an ideal of $S$ and $Tor_i(C)=\langle x^{T_i} \rangle$ for some $0\leq T_i \leq n$.
Then $C$ has the form
\begin{equation} \label{equu5}
C=\langle f_0(x), f_1(x),\ldots, f_{e-1}(x)\rangle,
\end{equation}
such that
\begin{description}
  \item[(i)] when $Tor_i(C)=0$,  $f_i(x)=0$.
  \item[(ii)] when $Tor_i(C)\neq 0$,
\begin{equation} \label{equu6}
f_i(x)=\gamma^i x^{T_i}+ \gamma^{i+1}x^{t_{i+1,i}}h_{i+1,i}(x)+ \gamma^{i+2}x^{t_{i+2,i}}h_{i+2,i}(x)+\cdots + \gamma^{e-1}x^{t_{e-1,i}}h_{e-1,i}(x),
\end{equation}
where $h_{j,i}(x) \in \mathcal{T}_{R}[x]$ is either zero or a unit of $S$ and $t_{j,i} + \deg(h_{j,i}) < T_{j}$.
\end{description}

Moreover, the $e$-tuple $\big(   f_0(x), f_1(x),\ldots, f_{e-1}(x) \big)$ is unique.
\end{theorem}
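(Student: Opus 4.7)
The plan is to split the argument into existence and uniqueness. Existence proceeds in two constructive phases followed by a cascaded generation check. First I would establish an initial generator at each torsion level: for each $i$ with $Tor_i(C) \neq 0$, since $x^{T_i} \in Tor_i(C) = \langle x^{T_i}\rangle$ there exists $v_i(x) \in R[x]$ with $\gamma^i v_i(x) \in C$ and $\overline{v_i} = x^{T_i}$. Applying the canonical expansion \eqref{equu4} to $v_i$ and using that $\overline{v_i} = x^{T_i}$ together with the convention that the unit factors in \eqref{equu4} have nonzero constant term, the $\gamma^0$-part of $v_i$ is forced to equal $x^{T_i}$ exactly. Thus $g_i := \gamma^i v_i \in C$ has leading term $\gamma^i x^{T_i}$ followed by a tail already in canonical form at levels $\gamma^{i+1}, \ldots, \gamma^{e-1}$.

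Next I would reduce the tail levels $j = i+1, \ldots, e-1$ in order to enforce the degree bound $t_{j,i} + \deg(h_{j,i}) < T_j$. Having fixed some $g_j \in C$ of the form $\gamma^j x^{T_j} + \gamma^{j+1}(\cdots)$ from the previous phase, whenever the current $\gamma^j$-part of the working polynomial contains a monomial $a x^k$ with $k \geq T_j$, I subtract $a x^{k-T_j} g_j$. Because $\lambda$ is non-invertible, $\lambda \in \gamma R$, so the wrap-around $x^n = \lambda$ only contributes at levels $\geq \gamma^{j+1}$ and the cancellation at level $\gamma^j$ is clean. After finitely many such subtractions level $\gamma^j$ satisfies the bound; I then re-expand the altered tail in canonical form \eqref{equu4} via Proposition \ref{proo2} and pass to level $j+1$. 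The final output is $f_i$ in the stated form \eqref{equu6}.

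To confirm $C = \langle f_0, \ldots, f_{e-1}\rangle$, I would run a downward $\gamma$-adic cascade on any $c \in C$: the residue $\overline{c} \in Tor_0(C) = \langle x^{T_0}\rangle$ factors as $\overline{c} = x^{T_0}\overline{q_0}$, and a lift $\tilde{q}_0 \in \mathcal{T}_R[x]$ yields $c - \tilde{q}_0 f_0 \in \gamma S$; writing this difference as $\gamma u_1$, the residue $\overline{u_1} \in Tor_1(C) = \langle x^{T_1}\rangle$ lets me kill the $\gamma^1$-level with $f_1$, and so on. After $e$ iterations the remainder lies in $\gamma^e S = 0$, expressing $c$ as an $S$-combination of the $f_j$.

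For uniqueness, suppose $(f_i)$ and $(f'_i)$ both realize the form. Then $f_i - f'_i \in C$ has vanishing $\gamma^j$-parts for $j \leq i$, so $f_i - f'_i = \gamma^{i+1} u$ for some $u \in S$, and $\overline{u} \in Tor_{i+1}(C) = \langle x^{T_{i+1}}\rangle$. But $\overline{u}$ is the $\mathbb{F}_q$-reduction of the difference of the two $\gamma^{i+1}$-level polynomials, each of $\mathbb{F}_q[x]$-degree strictly less than $T_{i+1}$; hence $\overline{u} = 0$, and Lemma \ref{lemm2} then pins down $t_{i+1,i} = t'_{i+1,i}$ and $h_{i+1,i} = h'_{i+1,i}$. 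Iterating up the $\gamma$-levels for $j = i+1, \ldots, e-1$ yields $f_i = f'_i$. The main obstacle is the bookkeeping in the reduction phase: one must verify that each cancellation preserves the canonical form \eqref{equu4} and that the induced changes at levels $\gamma^{j+1}, \ldots$ do not undo the progress at level $\gamma^j$; Lemma \ref{lemm2} is what licenses the repeated reindexing.
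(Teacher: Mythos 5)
Your argument is correct, but it is organized quite differently from the paper's. The paper proves Theorem \ref{thee5} by induction on the nilpotency index $e$: it projects $C$ to $\mu_{e-1}(C)\subseteq S_{e-1}$, invokes the induction hypothesis there, transfers the torsional degrees back via Lemma \ref{lemm1}, lifts the generators $f_i'(x)$ to $F_i(x)\in C$, and corrects the top $\gamma^{e-1}$-layer by subtracting a multiple of $f_{e-1}(x)=\gamma^{e-1}x^{T_{e-1}}$; uniqueness is then argued top-down starting from the $(e-2)$nd generator. You instead work entirely inside $S$: you first pick, for each $i$, a witness $g_i=\gamma^i v_i\in C$ of the $i$th torsional degree, then perform a Gaussian-elimination-style sweep across the $\gamma$-adic layers $j=i+1,\dots,e-1$, using the fact that $\lambda\in\gamma R$ pushes all wrap-around contributions to strictly higher layers so that each cancellation at level $j$ is clean and only perturbs levels above $j$; generation is established by a $\gamma$-adic division cascade, and uniqueness bottom-up. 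Both uniqueness arguments rest on the same observation (a codeword of the form $\gamma^j x^k h(x)$ with $k+\deg h<T_j$ and $h$ a unit would contradict $Tor_j(C)=\langle x^{T_j}\rangle$, which is Theorem \ref{thee4}(2)). What your route buys is the elimination of the induction on $e$ and of the auxiliary transfer machinery of Lemma \ref{lemm1} and Remark \ref{remm2}; what the paper's route buys is that the inductive framework of the rings $R_j$, $S_j$ is already set up and reused elsewhere, and the lifting step automatically produces the degree bounds $t_{j,i}+\deg(h_{j,i})<T_j$ for all intermediate layers at once rather than layer by layer. Two small points you should make explicit if you write this up: in the generation cascade the level-$0$ step must be skipped (or read as vacuous) when $Tor_0(C)=0$, and the termination of each level-$j$ sweep should be justified by noting that no subtraction ever reintroduces a monomial of degree $\geq T_j$ at level $j$.
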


\begin{proof}
We first prove that $C$ has the form (\ref{equu5}) satisfying (i) and (ii) by induction on the nilpotency index $e$.

When $e=1$, by Corollary \ref{cor1},  $S\cong\mathbb{F}_{q}[x]/\langle x^n \rangle$ and $C=\langle x^{T_0} \rangle$ for some $0\leq T_0 \leq n$.
Thus $Tor_0(C)=C=\langle x^{T_0} \rangle$.
Let
\begin{equation*}
   f_0(x)=
   \begin{cases}
  0,                                          &C=0,\\
  x^{T_0},                              & C \neq 0.
   \end{cases}
  \end{equation*}
Then $C=\langle f_0(x)\rangle$, and $f_0(x)$ satisfies (i) and (ii).
So the statement is true for $e=1$.

Now suppose that any ideal of $S_{e-1}$  has the form given in (\ref{equu5}).
Let $C$ be an ideal of $S$ and $Tor_i(C)=\langle x^{T_i} \rangle$ for some $0\leq T_i \leq n$.
Then $\mu_{e-1}(C)$ is an  ideal of $S_{e-1}$ and hence, by the induction hypothesis, $\mu_{e-1}(C)$ has the form $\langle f_0^{\prime}(x), f_1^{\prime}(x),\ldots, f_{e-2}^{\prime}(x)\rangle.$
By Lemma \ref{lemm1}, for any $0 \leq i \leq e-2$, we have $Tor_i(C)=\Phi_{e-1}\Big( Tor_i \big(\mu_{e-1}(C)\big)\Big)$.
Thus  $Tor_i \big(\mu_{e-1}(C)\big)=\langle x^{T_i} \rangle$.
By the induction hypothesis, for any $0 \leq i \leq e-2$, we have
\begin{itemize}
  \item when $Tor_i \big(\mu_{e-1}(C)\big)=0$,  $f_i^{\prime}(x)=0$.
  \item when $Tor_i \big(\mu_{e-1}(C)\big)\neq 0$,
\begin{equation*}
f_i^{\prime}(x)=\gamma_{\langle e-1\rangle}^i x^{T_i}+ \gamma_{\langle e-1\rangle}^{i+1}x^{t_{i+1,i}^{\prime}}h_{i+1,i}^{\prime}(x)+ \gamma_{\langle e-1\rangle}^{i+2}x^{t_{i+2,i}^{\prime}}h_{i+2,i}^{\prime}(x)+\cdots + \gamma_{\langle e-1\rangle}^{e-2}x^{t_{e-2,i}^{\prime}}h_{e-2,i}^{\prime}(x),
\end{equation*}
where $h_{j,i}^{\prime}(x) \in \mathcal{T}_{R_{e-1}}[x]$ is either zero or a unit of $S_{e-1}$ and $t_{j,i}^{\prime} + \deg(h_{j,i}^{\prime}) < T_{j}$.
\end{itemize}

If $Tor_{e-1}(C)=0$, then $C=0$.
Let $f_i=0$ for all $0 \leq i \leq e-1$.
Then $$C=\langle f_0(x), f_1(x),\ldots, f_{e-1}(x)\rangle$$ and it satisfies (i) and (ii).

If $Tor_{e-1}(C)\neq 0$, then let $f_{e-1}(x)= \gamma^{e-1}x^{T_{e-1}}$.
It is obvious that $f_{e-1}(x) \in C$.
For any $0 \leq i \leq e-2$,
if $Tor_i(C)=0$, then let $f_i(x)=0$.
If $Tor_i(C)\neq 0$,
there exists $F_i(x) \in C$ such that $\mu_{e-1} \big(   F_i(x)  \big)=f_i^{\prime}(x)$.
According to (\ref{equu4}), we write
$$F_i(x)=\sum_{j=0}^{e-2}\gamma^j x^{t_{j,i}}h_{j,i}(x)+\gamma^{e-1}H_{e-1,i}(x),$$
where $0 \leq t_{j,i} \leq n$  and $h_{j,i} \in \mathcal{T}_{R}[x]$ is either zero or a unit of $S$ and $H_{e-1,i}(x)\in\mathcal{T}_{R}[x]$.
Thus we have
$$\mu_{e-1} \big(   F_i(x)  \big)=\sum_{j=0}^{e-2}\gamma_{\langle e-1\rangle}^j x^{t_{j,i}}\mu_{e-1} \big( h_{j,i}(x)\big)=
\gamma_{\langle e-1\rangle}^i x^{T_i}+\sum_{j=i+1}^{e-2} \gamma_{\langle e-1\rangle}^{j}x^{t_{j,i}^{\prime}}h_{j,i}^{\prime}(x).$$
It follows that
1) for $0\leq j \leq i-1$, $h_{j,i}(x)=0$;
2) $t_{i,i}=T_i$ and $h_{i,i}(x)=1$;
3) for $i+1 \leq j \leq e-2$,  $t_{j,i}=t_{j,i}^{\prime}$ and $\mu_{e-1} \big( h_{j,i}(x)\big)=h_{j,i}^{\prime}(x)$.
Hence
$$F_i(x)=\gamma^i x^{T_i}+\sum_{j=i+1}^{e-2} \gamma^{j}x^{t_{j,i}}h_{j,i}(x)+\gamma^{e-1}H_{e-1,i}(x)$$
and $t_{j,i} + \deg(h_{j,i}) = t_{j,i}^{\prime} + \deg(h_{j,i}^{\prime}) < T_{j}$.
Let
$$H_{e-1,i}(x)=\sum_{k=0}^{n-1}z_k x^k,~\widetilde{H}_{e-1,i}(x)=\sum_{k=T_{e-1}}^{n-1}z_k x^{k-T_{e-1}},$$ where $z_k \in \mathcal{T}_{R}$.
Then
$$H_{e-1,i}(x)-\widetilde{H}_{e-1,i}(x)x^{T_{e-1}}=\sum_{k=0}^{T_{e-1}-1}z_k x^k$$  can be written as $x^{t_{e-1,i}}h_{e-1,i}(x)$, where $h_{e-1,i} \in \mathcal{T}_{R}[x]$ is either zero or a unit of $S$ and $t_{e-1,i}+ \deg (h_{e-1,i}) < T_{e-1}$.
Let
$$f_i(x)=\gamma^i x^{T_i}+ \sum_{j=i+1}^{e-2} \gamma^{j}x^{t_{j,i}}h_{j,i}(x)+ \gamma^{e-1}x^{t_{e-1,i}}h_{e-1,i}(x).$$
We can get that
$$f_i(x)=F_i(x)-\widetilde{H}_{e-1,i}(x)f_{e-1}(x) \in C$$
and $f_i(x)$ satisfies (i) and (ii).
As a result, $f_0(x), f_1(x),\ldots, f_{e-1}(x) \in  C $ and satisfy (i) and (ii).

We claim that $C=\langle f_0(x), f_1(x),\ldots, f_{e-1}(x)\rangle$.
First of all, since $f_0(x), f_1(x),\ldots, f_{e-1}(x) \in  C $, we have $\langle f_0(x), f_1(x),\ldots, f_{e-1}(x)\rangle \subseteq C$.
Conversely, suppose $c(x) \in C$, then $\mu_{e-1} \big( c(x) \big) \in \mu_{e-1}(C)$ and hence
$$\mu_{e-1} \big( c(x) \big)=\sum_{i=0}^{e-2}a_i^{\prime}(x)f_i^{\prime}(x),$$ where $a_i^{\prime}(x)\in S_{e-1}$.
Let $a_i(x) \in S_{e}$ such that $\mu_{e-1} \big( a_i(x) \big)=a_i^{\prime}(x) $ for $0 \leq i \leq e-2$.
Thus $$\mu_{e-1} \big( c(x) \big)=\mu_{e-1} \big( \sum_{i=0}^{e-2}a_i(x)f_i(x) \big),$$
which means that $$c(x)= \sum_{i=0}^{e-2}a_i(x)f_i(x)+\gamma^{e-1}x^ta_{e-1}(x),$$ for some $0\leq t \leq n-1$ and $a_{e-1}(x)\in \mathcal{T}_{R}[x]$ which is either zero or a unit of $S$.
It follows that $\gamma^{e-1}x^ta_{e-1}(x) \in C$.
If $a_{e-1}(x)=0$, then  $c(x)= \sum_{i=0}^{e-2}a_i(x)f_i(x) \in \langle f_0(x), f_1(x),\ldots, f_{e-1}(x)\rangle$.
If $a_{e-1}(x) \in \mathcal{T}_{R}[x]$ is a unit of $S$, then $\gamma^{e-1}x^t \in C$ and hence
 $x^t \in Tor_{e-1}(C)= \langle x^{T_{e-1}}\rangle$.
This implies that $t\geq T_{e-1}$ and so we have
$$c(x)= \sum_{i=0}^{e-2}a_i(x)f_i(x)+x^{t-T_{e-1}}a_{e-1}(x)f_{e-1}(x) \in \langle f_0(x), f_1(x),\ldots, f_{e-1}(x)\rangle.$$
As a result, $C\subseteq\langle f_0(x), f_1(x),\ldots, f_{e-1}(x)\rangle$.
Thus, we have shown that $C=\langle f_0(x), f_1(x),\ldots, f_{e-1}(x)\rangle$ as claimed.

To prove the uniqueness, we suppose that $C=\langle g_0(x), g_1(x),\ldots, g_{e-1}(x)\rangle$
such that
\begin{description}
  \item[(i)] when $Tor_i(C)=0$,  $g_i(x)=0$.
  \item[(ii)] when $Tor_i(C)\neq 0$,
\begin{equation*}
g_i(x)=\gamma^i x^{T_i}+ \gamma^{i+1}x^{s_{i+1,i}}w_{i+1,i}(x)+ \gamma^{i+2}x^{s_{i+2,i}}w_{i+2,i}(x)+\cdots + \gamma^{e-1}x^{s_{e-1,i}}w_{e-1,i}(x),
\end{equation*}
where $w_{j,i}(x) \in \mathcal{T}_{R}[x]$ is either zero or a unit of $S$ and $s_{j,i} + \deg(w_{j,i}) < T_{j}$.
\end{description}
If $C=0$, then $g_i(x)=0$ for all $0 \leq i\leq e-1$.
Thus $$\left(   f_0(x), f_1(x),\ldots, f_{e-1}(x) \right)=\left(  g_0(x), g_1(x),\ldots, g_{e-1}(x) \right) = (0,0,\ldots, 0).$$
If $C \neq 0$, then $Tor_{e-1}(C)\neq 0$ and it is clear that $g_{e-1}=f_{e-1}=\gamma^{e-1} x^{T_{e-1}}$.
Consider $$g_{e-2}-f_{e-2}=\gamma^{e-1} \left( x^{s_{e-1,e-2}}w_{e-1,e-2}(x)-x^{t_{e-1,e-2}}h_{e-1,e-2}(x) \right) \in C$$
which can be written as $\gamma^{e-1}x^k h(x)$,
 where $0\leq k \leq n-1$ and $h(x)\in \mathcal{T}_{R}[x]$ is either zero or a unit of $S$.
If $h(x)$ is a unit of $S$, then $\gamma^{e-1}x^k \in C$ and $k \leq T_{e-1}-1 < T_{e-1}$.
$\gamma^{e-1}x^k \in C$  implies $x^k \in Tor_{e-1}(C)= \langle  x^{T_{e-1}} \rangle$, which means $k \geq T_{e-1}$,
a contradiction.
Therefore, $h(x)=0$ and so $g_{e-2}=f_{e-2}$.
Proceeding inductively, we have that $g_{i}=f_{i}$ for all $0 \leq i\leq e-1$.
Thus $$\big(   f_0(x), f_1(x),\ldots, f_{e-1}(x) \big)=\big(  g_0(x), g_1(x),\ldots, g_{e-1}(x) \big),$$
which means the expression is unique.
\end{proof}

\begin{definition}\label{def2}
{\rm Let $C$ be an ideal of $S$.
 We define the unique $e$-tuple obtained from Theorem \ref{thee5} to be the \emph{representation} of $C$.
 In that case, we also say that $C=\langle\langle f_0(x), f_1(x),\ldots, f_{e-1}(x)\rangle\rangle$.}
\end{definition}

\begin{example}\label{exa1}
If  $e>1$, $n>1$ and $\lambda \in \gamma R\backslash \gamma^2 R$,
it is shown that in Corollary \ref{cor1},
$$\{0\}=\langle x^{ne}\rangle\subsetneqq  \langle x^{ne-1}\rangle \subsetneqq \cdots \subsetneqq \langle x\rangle \subsetneqq \langle x^0\rangle=S$$ are all ideals of $S$.
Let $C$ be a nonzero ideal of  $S$, then there exists $0\leq j \leq ne-1$ such that $C=\langle x^j \rangle$.
There also exist $0 \leq k \leq e-1$ and $0 \leq w \leq n-1$ such that $j=kn+w$ and so $C=\langle x^{kn+w} \rangle =\langle \gamma^kx^w\rangle$.
Note that for $ i > k$, $\gamma^i=(\gamma^kx^w)x^{(i-k)n-w}u$ for some unit $u$ of $R$.
It is easy to see that
\begin{equation*}
   Tor_i(C)=
   \begin{cases}
  0,                                          & i < k,\\
  \langle x^w \rangle,        & i =k,\\
  \mathbb{F}_{q}^n,    & i > k,
   \end{cases}
  \end{equation*}
and
\begin{equation*}
   T_i=
   \begin{cases}
  n,                                          & i < k,\\
  w,        & i =k,\\
  0,    & i > k.
   \end{cases}
  \end{equation*}
By Theorem \ref{thee5}, $C=\langle x^j \rangle=\langle\langle0,\ldots, 0, \gamma^k x^w,\gamma^{k+1},\ldots, \gamma^{e-1}\rangle\rangle$.
\end{example}

According to the proof of the uniqueness in Theorem \ref{thee5}, we can easily obtain the following.

\begin{corollary}\label{cor2}
Let $C=\langle\langle f_0(x), f_1(x),\ldots, f_{e-1}(x)\rangle\rangle$ be an ideal of $S$.
Assume that $Tor_i(C)\neq 0$.
Then $f_i(x)$ is the unique polynomial in $C$ which has the form (\ref{equu6}).
\end{corollary}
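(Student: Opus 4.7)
The plan is to mirror the uniqueness argument at the end of the proof of Theorem~\ref{thee5}. Suppose $g_i(x) \in C$ is another polynomial of the form (\ref{equu6}), say
\begin{equation*}
g_i(x) = \gamma^i x^{T_i} + \sum_{j=i+1}^{e-1} \gamma^j x^{s_{j,i}} w_{j,i}(x),
\end{equation*}
with each $w_{j,i}(x) \in \mathcal{T}_{R}[x]$ either zero or a unit of $S$, and $s_{j,i} + \deg(w_{j,i}) < T_{j}$. First I would form the difference $h(x) := f_i(x) - g_i(x) \in C$. Since the leading terms cancel, one is left with
\begin{equation*}
h(x) = \sum_{j=i+1}^{e-1} \gamma^j D_j(x), \qquad D_j(x) := x^{t_{j,i}} h_{j,i}(x) - x^{s_{j,i}} w_{j,i}(x),
\end{equation*}
where each $D_j(x)$ has degree strictly less than $T_{j} \leq n$.

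Next I would show by induction on $k = i+1, i+2, \ldots, e-1$ that $D_k(x) = 0$. For the base case, factor $\gamma^{i+1}$ from $h(x)$ to write $h(x) = \gamma^{i+1} P_{i+1}(x)$, where $P_{i+1}(x) = D_{i+1}(x) + \gamma D_{i+2}(x) + \cdots + \gamma^{e-2-i} D_{e-1}(x)$. Since $h(x) \in C$, Definition~\ref{def1} gives $\overline{P_{i+1}(x)} \in Tor_{i+1}(C) = \langle x^{T_{i+1}}\rangle$ inside $\overline{S}$. But $\overline{P_{i+1}} = \overline{D_{i+1}}$ has degree less than $T_{i+1} \leq n$, and $\langle x^{T_{i+1}}\rangle \subseteq \overline{S}$ contains no nonzero polynomial representative of degree less than $T_{i+1}$, so $\overline{D_{i+1}(x)} = 0$ in $\mathbb{F}_{q}[x]$. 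Because $h_{i+1,i}$ and $w_{i+1,i}$ lie in $\mathcal{T}_{R}[x]$ and the canonical projection is a bijection from $\mathcal{T}_{R}[x]$ onto $\mathbb{F}_{q}[x]$, this forces $D_{i+1}(x) = 0$: either both $h_{i+1,i}$ and $w_{i+1,i}$ are zero, or (both being units of $S$ with nonzero constant term by Theorem~\ref{thee2}) comparison of the lowest-degree terms in $x^{t_{i+1,i}}\overline{h_{i+1,i}} = x^{s_{i+1,i}}\overline{w_{i+1,i}}$ yields $t_{i+1,i} = s_{i+1,i}$ and $h_{i+1,i} = w_{i+1,i}$. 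The inductive step is identical: once $D_{i+1} = \cdots = D_{k-1} = 0$, one factors $\gamma^k$ from $h(x)$ and applies $Tor_k(C) = \langle x^{T_k}\rangle$. Consequently $h = 0$ and $f_i = g_i$.

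The main obstacle is the last step, namely passing from the vanishing of $\overline{D_k}$ in $\overline{S}$ to the vanishing of $D_k$ in $S$. This relies on two ingredients already established: that reduction modulo $\gamma$ induces a bijection between $\mathcal{T}_{R}[x]$ and $\mathbb{F}_{q}[x]$, and that a unit of $S$ lying in $\mathcal{T}_{R}[x]$ has nonzero constant term (so its image in $\mathbb{F}_{q}[x]$ has a well-defined lowest-degree term located at index $0$). Once this bookkeeping is in place, the argument is a direct echo of the uniqueness portion at the close of the proof of Theorem~\ref{thee5}.
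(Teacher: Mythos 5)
Your argument is correct and is essentially the paper's own route: the paper proves Corollary~\ref{cor2} by pointing to the uniqueness part of the proof of Theorem~\ref{thee5}, where the difference of two candidate polynomials is likewise annihilated layer by layer using $Tor_j(C)=\langle x^{T_j}\rangle$ together with the degree bound $t_{j,i}+\deg(h_{j,i})<T_j$ and the bijectivity of reduction on $\mathcal{T}_{R}[x]$.
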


In the following, we obtain the minimum Hamming distances of all nonzero NIE--constacyclic codes over finite chain rings.

\begin{theorem}\label{thee6}
Let $C$ be a nonzero $\lambda$-constacyclic code of length $n$ over  $R$.
Then $d(C)=1$.
\end{theorem}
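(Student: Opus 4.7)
The plan is to exhibit, for any nonzero NIE-constacyclic code $C$, an explicit codeword of Hamming weight exactly one. Since $\lambda$ is non-invertible in $R$, we have $\lambda \in \gamma R$, so $\lambda$ is nilpotent. The structure theorem (Theorem \ref{thee5}) already gives a very rigid form for codewords, and I will leverage it directly rather than analyzing arbitrary elements of $C$.

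The first step is to invoke Theorem \ref{thee5} to write $C = \langle\langle f_0(x), f_1(x), \ldots, f_{e-1}(x) \rangle\rangle$. Since $C \neq 0$, Proposition \ref{proo3} forces $Tor_i(C) \neq 0$ for some $i$; pick the smallest such $i$. Then
\begin{equation*}
f_i(x) = \gamma^i x^{T_i} + \sum_{j=i+1}^{e-1} \gamma^j x^{t_{j,i}} h_{j,i}(x) \in C,
\end{equation*}
where $0 \leq T_i \leq n-1$. The inequality $T_i \leq n-1$ holds because $Tor_i(C) = \langle x^{T_i}\rangle$ is a nonzero ideal of $\overline{R}[x]/\langle x^n \rangle$, and $\langle x^n \rangle = 0$ in that quotient forces $T_i < n$.

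The second (and decisive) step is to multiply $f_i(x)$ by $\gamma^{e-1-i}$. Since $C$ is an ideal of $S$, the element $\gamma^{e-1-i} f_i(x)$ lies in $C$. For each $j \geq i+1$ the factor $\gamma^{e-1-i+j}$ has exponent $\geq e$ and therefore vanishes in $R$; thus every term in the sum over $j$ dies, and only the leading term survives:
\begin{equation*}
\gamma^{e-1-i} f_i(x) = \gamma^{e-1} x^{T_i} \in C.
\end{equation*}
Since $e$ is the nilpotency index of $\gamma$, we have $\gamma^{e-1}\neq 0$; and because $0 \leq T_i \leq n-1$, the polynomial $\gamma^{e-1} x^{T_i}$ corresponds under the standard isomorphism to the vector in $R^n$ whose only nonzero coordinate is $\gamma^{e-1}$ in position $T_i$. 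This codeword has Hamming weight exactly $1$, so $d(C) \leq 1$, and since $C \neq 0$ we conclude $d(C) = 1$.

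There is no serious obstacle here: the heavy lifting was done in Theorem \ref{thee5}, which guarantees a defining polynomial with leading term $\gamma^i x^{T_i}$. The only subtlety worth flagging is the verification that $T_i \leq n-1$ (otherwise $\gamma^{e-1}x^{T_i}$ would be zero in $S$), and this is immediate from the observation above about the residue ring $\overline{R}[x]/\langle x^n\rangle$.
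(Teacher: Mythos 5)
Your proof is correct, but it takes a genuinely different route from the paper's. You lean on the structure theorem (Theorem \ref{thee5}): any nonzero $C$ contains a canonical generator $f_i(x)$ with leading term $\gamma^i x^{T_i}$ and all remaining terms divisible by $\gamma^{i+1}$, so multiplying by $\gamma^{e-1-i}$ annihilates everything except $\gamma^{e-1}x^{T_i}$, which is a weight-one codeword once one checks $T_i\leq n-1$ and $\gamma^{e-1}\neq 0$ --- both of which you do. The paper instead argues from scratch with an arbitrary nonzero codeword $\mathbf{c}$: it scales by a power of $\gamma$ so that the surviving coordinates all sit at level $\gamma^{e-1}$, and then applies the shift $\tau_\lambda$ repeatedly, using that the wrap-around coordinate gets multiplied by $\lambda\in\gamma R$ and hence $\lambda\gamma^{e-1}=0$, until only one nonzero entry remains. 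Your argument is shorter but imports the full weight of Theorem \ref{thee5} (where the non-invertibility of $\lambda$ is already baked in, via $\overline{\lambda}=0$ and the nilpotency of $x$ in $S$); the paper's argument is longer but elementary, independent of the classification of ideals, and makes visible exactly where non-invertibility of $\lambda$ is used. One cosmetic remark: choosing the \emph{smallest} $i$ with $Tor_i(C)\neq 0$ is unnecessary --- any such $i$ works --- and the appeal to Proposition \ref{proo3} could be replaced by the simpler observation that $C\neq 0$ forces $Tor_{e-1}(C)\neq 0$.
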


\begin{proof}
Suppose $\lambda=0$, let $\textbf{c} = (c_0, c_1,\ldots, c_{n-1})$ be a nonzero codeword of $C$.
Let $t=\min\{  i ~|~ c_i \neq 0, 0\leq i \leq n-1 \}$.
Then $\tau_\lambda^{n-t-1}(\textbf{c})=(0,\ldots,0,c_t) \in C$.
So we have $d(C)=1$.

Suppose $\lambda\neq 0$ is non-invertible.
Then $\lambda=\gamma^ku$, where $1\leq k \leq e-1$ and $u$ is a unit of $R$.
Let $\textbf{c} = (c_0, c_1,\ldots, c_{n-1})$ be a nonzero codeword of $C$ and write each $c_i$ as $c_i=\gamma^{k_i}u_i$, where $1\leq k_i \leq e-1$ and $u_i$ is zero or a unit of $R$.

(1) If $|\{  k_i ~|~ u_i \neq 0, 0\leq i \leq n-1 \}|=1$, let $l=\min\{  k_i ~|~ u_i \neq 0, 0\leq i \leq n-1 \}$.
Then $\textbf{c} =\gamma^l (u_0, u_1,\ldots, u_{n-1})$ and
$\gamma^{e-l-1}\textbf{c}=\gamma^{e-1}(u_0, u_1,\ldots, u_{n-1}) \in C$ is a nonzero codeword.
Let $t=\min\{  i ~|~ u_i \neq 0, 0\leq i \leq n-1 \}$.
Thus $\tau_\lambda^{n-t-1}(\gamma^{e-l-1}\textbf{c})=(0,\ldots,0,\gamma^{e-1}u_t) \in C$ and
$\text{wt} \left(\tau_\lambda^{n-t-1}(\gamma^{e-l-1}\textbf{c}) \right)=1$.
This implies  $d(C)=1$.

(2) If $|\{  k_i ~|~ u_i \neq 0, 0\leq i \leq n-1 \}|\geq 2$, let $l^{\prime}=\min\{  k_i ~|~ u_i \neq 0, 0\leq i \leq n-1 \}$ and
$t=\min\{  i ~|~ k_i = l^{\prime}, 0\leq i \leq n-1 \}$.
Then $\gamma^{e-l^{\prime}-1}\textbf{c}=\gamma^{e-1}(u_0^{\prime}, u_1^{\prime},\ldots, u_{n-1}^{\prime}) \in C$,
where
\begin{equation*}
   u_i^{\prime}=
   \begin{cases}
  0,                                          & u_i=0,\\
  0,                                          & u_i\neq 0 ~\text{and}~ k_i\neq l^{\prime},\\
  u_i ,                                     & u_i\neq 0  ~\text{and}~k_i = l^{\prime}.
   \end{cases}
  \end{equation*}
So $\tau_\lambda^{n-t-1}(\gamma^{e-l^{\prime}-1}\textbf{c})=(0,\ldots,0,\gamma^{e-1}u_t) \in C$.
Note that $\gamma^{e-1}u_t\neq 0$ and we have
$$\text{wt} \left(\tau_\lambda^{n-t-1}(\gamma^{e-l^{\prime}-1}\textbf{c}) \right)=1.$$
Hence, $d(C)=1$.
\end{proof}

\subsection{The Dual Codes}
Let $\mathcal{R}$ be a finite Frobenius ring with identity, $\widehat{\lambda}$ be an element in $\mathcal{R}$.
For any $\mathbf{a}=(a_0,a_1,\cdots ,a_{n-1}) \in \mathcal{R}^n$, define $\mathcal{P}(\mathbf{a}):=\sum_{i=0}^{n-1}a_ix^i \in \mathcal{R}[x]$.

For a $\widehat{\lambda}$-constacyclic code $C$ of length $n$ over $\mathcal{R}$, define $\mathcal{A}(C)$ to be $\{\mathbf{a} \in \mathcal{R}^n~|~\mathcal{P}(\mathbf{c})\mathcal{P}(\mathbf{a})=0~\text{in}~ \mathcal{R}[x]/ \langle  x^n-\widehat{\lambda} \rangle,~ \text{for any} ~ \mathbf{c} \in C\}$.
It is obvious that  $\mathcal{A}(C)$  is also a $\widehat{\lambda}$-constacyclic code of length $n$ over $\mathcal{R}$.

For any integer $k$, let $P_k=\begin{pmatrix}
&   &  &  1\\
&   & \udots & \\
& 1 &  & \\
1 & & &
\end{pmatrix}_{k\times k}$.
Let $\pi$ be given by \begin{equation*}
\begin{aligned}
\pi:~~~\mathcal{R}^n ~~&\longrightarrow~~\mathcal{R}^n, \\
\mathbf{c}=(c_0,c_1,\dots ,c_{n-2},c_{n-1})~~&\longmapsto ~~\mathbf{c}P_n=(c_{n-1},c_{n-2},\dots , c_1,c_0).
\end{aligned}
\end{equation*}
It is easy to see that $\pi$ is a permutation of coordinates and
 $\pi^2$ is the identity map on $\mathcal{R}^n$.

We have the following.

\begin{theorem}\label{thee9}
 Let $C$ be a  $\widehat{\lambda}$-constacyclic code of length $n$ over  $\mathcal{R}$.
Then $C^{\perp}=\pi \big(\mathcal{A}(C) \big).$
\end{theorem}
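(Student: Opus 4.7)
The key idea is to translate the multiplicative condition defining $\mathcal{A}(C)$ into a family of inner-product identities, and then exploit the $\widehat{\lambda}$-constacyclic invariance of $C$ to collapse that family down to the single condition defining $C^\perp$. Concretely, my plan is to show directly that $\mathbf{a}\in\mathcal{A}(C)$ if and only if $\pi(\mathbf{a})\in C^\perp$.

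The main computation I would perform first is the following. For $\mathbf{c}=(c_0,\ldots,c_{n-1})\in C$ and $\mathbf{a}=(a_0,\ldots,a_{n-1})\in\mathcal{R}^n$, expand
\[
\mathcal{P}(\mathbf{c})\mathcal{P}(\mathbf{a})=\sum_{k=0}^{2n-2}\Bigl(\sum_{i+j=k}c_ia_j\Bigr)x^k,
\]
and then reduce modulo $x^n-\widehat{\lambda}$ by replacing $x^{n+\ell}$ with $\widehat{\lambda}x^\ell$. This gives, for each $0\le \ell\le n-1$, the explicit coefficient
\[
[x^\ell]\bigl(\mathcal{P}(\mathbf{c})\mathcal{P}(\mathbf{a})\bmod\langle x^n-\widehat\lambda\rangle\bigr)=\sum_{i+j=\ell}c_ia_j+\widehat{\lambda}\sum_{i+j=\ell+n}c_ia_j.
\]
Setting $k=n-1-\ell$ and using the explicit form $\tau_{\widehat{\lambda}}^k(\mathbf{c})=(\widehat{\lambda}c_{n-k},\ldots,\widehat{\lambda}c_{n-1},c_0,\ldots,c_{n-1-k})$ together with $\pi(\mathbf{a})_i=a_{n-1-i}$, a short reindexing shows that this coefficient equals the inner product $\langle \tau_{\widehat{\lambda}}^k(\mathbf{c}),\pi(\mathbf{a})\rangle$. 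This identification is really the heart of the argument.

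With that identity established, the equivalence $\mathcal{P}(\mathbf{c})\mathcal{P}(\mathbf{a})=0$ in $\mathcal{R}[x]/\langle x^n-\widehat{\lambda}\rangle$ becomes: $\langle \tau_{\widehat{\lambda}}^k(\mathbf{c}),\pi(\mathbf{a})\rangle=0$ for all $0\le k\le n-1$. Hence $\mathbf{a}\in\mathcal{A}(C)$ if and only if $\pi(\mathbf{a})$ is orthogonal to every $\tau_{\widehat{\lambda}}^k(\mathbf{c})$ for every $\mathbf{c}\in C$ and every $k$. Since $C$ is $\widehat{\lambda}$-constacyclic we have $\tau_{\widehat{\lambda}}^k(C)\subseteq C$, so orthogonality to all such shifts is no stronger than orthogonality to $C$ itself (the $k=0$ case already suffices in the reverse direction). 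Therefore $\mathbf{a}\in\mathcal{A}(C)$ iff $\pi(\mathbf{a})\in C^\perp$, i.e.\ $\mathcal{A}(C)=\pi(C^\perp)$. Applying $\pi$ to both sides and using $\pi^2=\mathrm{id}$ yields $C^\perp=\pi(\mathcal{A}(C))$, as claimed.

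I expect the only real obstacle is bookkeeping: getting the split sum $\sum_{i+j=\ell}+\widehat{\lambda}\sum_{i+j=\ell+n}$ to match $\langle\tau_{\widehat{\lambda}}^k(\mathbf{c}),\pi(\mathbf{a})\rangle$ on the nose, which requires careful index shifts between $\ell$, $k=n-1-\ell$, and the two ``halves'' of the cyclic shift. Everything else is formal, and no Frobenius hypothesis is actually needed for this particular identity; the Frobenius assumption only justifies calling $\pi(\mathcal{A}(C))$ and $C^\perp$ well-behaved dual objects in the surrounding theory.
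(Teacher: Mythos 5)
Your proposal is correct and follows essentially the same route as the paper: both hinge on the identity expressing the coefficients of $\mathcal{P}(\mathbf{c})\mathcal{P}(\pi(\mathbf{a}))$ (equivalently, of $\mathcal{P}(\mathbf{c})\mathcal{P}(\mathbf{a})$ paired against $\pi(\mathbf{a})$) as the inner products $\langle \tau_{\widehat{\lambda}}^{\,n-1-k}(\mathbf{c}),\mathbf{a}\rangle$, then use $\tau_{\widehat{\lambda}}(C)\subseteq C$ for one direction and the single coefficient corresponding to the zeroth shift for the other. The paper carries out the reindexing you describe explicitly, and your observation that the Frobenius hypothesis is not needed for this identity is accurate.
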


\begin{proof}
We claim that for $\mathbf{a} \in \mathcal{R}^n$, $\mathbf{a} \in C^{\perp}$ if and only if $\pi(\mathbf{a})\in \mathcal{A}(C)$, i.e., $\mathcal{P} (\mathbf{c})\mathcal{P} (\pi(\mathbf{a}))=0$ in $\mathcal{R}[x]/ \langle  x^n-\widehat{\lambda} \rangle$ for any $\mathbf{c} \in C$.
Thus $C^{\perp}=\pi^{-1} \big(\mathcal{A}(C) \big)=\pi \big(\mathcal{A}(C) \big)$.

Now we prove the above claim.
We write $\mathbf{a}=(a_0,a_1,\cdots ,a_{n-1}),~\mathbf{c}=(c_0,c_1,\cdots ,c_{n-1})$.
Let $a_j=a'_{n-1-j}$ and then $\mathcal{P} (\pi(\mathbf{a}))=\sum_{i=0}^{n-1}a'_{i}x^i$.
Let $a'_i=c_i=0$ for all $i\geq n$.

Suppose that $\mathbf{a} \in C^{\perp}$, then $\langle \tau^k_{\widehat{\lambda}}( \mathbf{c}), \mathbf{a}  \rangle =0$ for any $\mathbf{c} \in C$ and $0\leq k \leq n-1$.
Note that $\tau^k_{\widehat{\lambda}}( \mathbf{c})=(\widehat{\lambda} c_{n-k},\cdots,\widehat{\lambda} c_{n-1}, c_0, \cdots , c_{n-k-1})$.
Then we have  $\langle \tau^k_{\widehat{\lambda}}( \mathbf{c}), \mathbf{a}  \rangle = \sum_{i=0}^{n-k-1}c_ia_{k+i}+\widehat{\lambda}\sum_{i=n-k}^{n-1}c_ia_{k+i-n} = 0$.
Hence,
\begin{equation*}
\begin{split}
\mathcal{P} (\mathbf{c})\mathcal{P} (\pi(\mathbf{a}))
    = &\sum_{k=0}^{n-1}\left( \sum_{i=0}^{k} c_ia'_{k-i} \right)x^k + \sum_{k=n}^{2n-2}\left( \sum_{i=0}^{k} c_ia'_{k-i} \right)x^{n+\left(k-n \right)}\\
    = &\sum_{k=0}^{n-1}\left( \sum_{i=0}^{k} c_ia'_{k-i} \right)x^k + \widehat{\lambda} \sum_{k=n}^{2n-2}\left( \sum_{i=k-n+1}^{n-1} c_ia'_{k-i} \right)x^{k-n}\\
    = &\sum_{k=0}^{n-1}\left( \sum_{i=0}^{k} c_ia'_{k-i} \right)x^k + \widehat{\lambda}\sum_{k=0}^{n-2}\left( \sum_{i=k+1}^{n-1} c_ia'_{k+n-i} \right)x^{k}\\
     = &\sum_{k=0}^{n-2}  \left( \sum_{i=0}^{k} c_ia'_{k-i} + \widehat{\lambda}  \sum_{i=k+1}^{n-1}c_ia'_{k+n-i}  \right) x^{k}+ \left(\sum_{i=0}^{n-1}c_ia'_{n-1-i} \right)x^{n-1}\\
     =&\sum_{k=0}^{n-2}  \left( \sum_{i=0}^{k} c_ia_{n-1-k+i} + \widehat{\lambda}  \sum_{i=k+1}^{n-1}c_ia_{i-k-1}  \right) x^{k}+ \left(\sum_{i=0}^{n-1}c_ia_{i} \right)x^{n-1}\\
     =&\sum_{k=0}^{n-2} \langle \tau^{n-1-k}_{\widehat{\lambda}}( \mathbf{c}), \mathbf{a}  \rangle x^{k}+ \langle \mathbf{c}, \mathbf{a}\rangle x^{n-1}\\
     =&0.
\end{split}
\end{equation*}

Conversely, suppose $\mathcal{P} (\mathbf{c})\mathcal{P} (\pi(\mathbf{a}))=0$ in $\mathcal{R}[x]/ \langle  x^n-\widehat{\lambda} \rangle$ for any $\mathbf{c} \in C$.
Since $\mathcal{P} (\mathbf{c})\mathcal{P} (\pi(\mathbf{a}))=\sum_{k=0}^{n-2} \langle \tau^{n-1-k}_{\widehat{\lambda}}( \mathbf{c}), \mathbf{a}  \rangle x^{k}+ \langle \mathbf{c}, \mathbf{a}\rangle x^{n-1}=0$,
$\langle \mathbf{c}, \mathbf{a}\rangle=0$.
This yields that $\mathbf{a} \in C^{\perp}$.
We have proved the claim.
The result then follows.
\end{proof}

Recall that $R$ is a finite commutative chain ring 
and $\lambda$ is a non-invertible element in $R$.

For a $\lambda$-constacyclic code $C$ of length $n$ over $R$,  $\mathcal{A}(C)$  is also a $\lambda$-constacyclic code $C$ of length $n$ over $R$.
In the light of this, $C$ and $\mathcal{A}(C)$ will also be viewed as ideals of $\mathcal{R}[x]/ \langle  x^n-\lambda \rangle$.

\begin{remark}\label{rema4}
Let $C=\langle\langle f_0(x), f_1(x),\ldots, f_{e-1}(x) \rangle\rangle \subseteq \mathcal{R}[x]/ \langle  x^n-\lambda \rangle$ be a $\lambda$-constacyclic code of length $n$ over  $R$.
Since $\mathcal{A}(C)$  is also  a $\lambda$-constacyclic code of length $n$ over $R$, we can assume that $\mathcal{A}(C)$ has the form $\mathcal{A}(C)=\langle\langle g_0(x), g_1(x),\ldots, g_{e-1}(x) \rangle\rangle \subseteq\mathcal{R}[x]/ \langle  x^n-\lambda \rangle$.
In general, it is not easy to determine all $g_i(x)$s.
But in some special cases like `` $e\leq 2$" or `` $e>2$ and $f_i(x)=0$ for all $0\leq i \leq e-3$", $\mathcal{A}(C)$ can be easily obtained.
\end{remark}

\begin{theorem}\label{thee10}
Let $C\neq R^n$ be a $\lambda$-constacyclic code of length $n$ over  $R$.
Then $d(C^{\perp})=1$.
\end{theorem}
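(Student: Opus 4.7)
My plan is to combine Theorem~\ref{thee9} with Theorem~\ref{thee6}, via the simple observation that $\pi$ is merely a coordinate permutation and therefore preserves Hamming weight. By Theorem~\ref{thee9}, $C^{\perp}=\pi\bigl(\mathcal{A}(C)\bigr)$, so $d(C^{\perp})=d\bigl(\mathcal{A}(C)\bigr)$. Since $\mathcal{A}(C)$ is itself a $\lambda$-constacyclic code of length $n$ over $R$ (it is an ideal of $S$), Theorem~\ref{thee6} delivers $d\bigl(\mathcal{A}(C)\bigr)=1$ as soon as $\mathcal{A}(C)$ is nonzero. Thus the proof reduces to exhibiting a nonzero element of $\mathcal{A}(C)$ under the hypothesis $C\neq R^n$.

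The natural candidate is the ``socle element'' $\gamma^{e-1}x^{n-1}\in S$, which corresponds to the tuple $(0,\dots,0,\gamma^{e-1})\in R^n$. This is nonzero since $\gamma^{e-1}\neq 0$ in $R$. I would then verify that it annihilates the maximal ideal $\langle\gamma,x\rangle$ of $S$ by two one-line calculations: $\gamma^{e-1}x^{n-1}\cdot\gamma=\gamma^{e}x^{n-1}=0$, and $\gamma^{e-1}x^{n-1}\cdot x=\gamma^{e-1}\lambda$, which vanishes because $\lambda$ is non-invertible, hence $\lambda\in\gamma R$, so $\gamma^{e-1}\lambda\in\gamma^{e}R=\{0\}$.

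To close the argument I would invoke Theorem~\ref{thee3}: $\langle\gamma,x\rangle$ is the unique maximal ideal of $S$, so every proper ideal of $S$ is contained in it. Since $C\neq R^n$ means $C\subsetneq S$ as an ideal, we have $C\subseteq\langle\gamma,x\rangle$, and therefore $\gamma^{e-1}x^{n-1}$ annihilates every element of $C$. This gives $\gamma^{e-1}x^{n-1}\in\mathcal{A}(C)$, so $\mathcal{A}(C)\neq 0$, and the chain of reductions yields $d(C^{\perp})=1$.

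The only step requiring any care is the identification of $\mathcal{A}(C)$ with the annihilator ideal $\mathrm{Ann}_{S}(C)$ via the standard $R$-module isomorphism $\mathcal{P}:R^{n}\to S$; once that identification is in place the argument is essentially just the remark that the socle of $S$ contains a monomial of the form $\gamma^{e-1}x^{n-1}$, whose coordinate representation has Hamming weight $1$. I do not anticipate any genuine obstacle beyond locating this one element; the heavy lifting has already been done in Theorems~\ref{thee3}, \ref{thee6}, and \ref{thee9}.
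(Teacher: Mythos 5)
Your proposal is correct and follows essentially the same route as the paper: reduce via Theorem~\ref{thee9} and the weight-preserving permutation $\pi$ to showing $d\bigl(\mathcal{A}(C)\bigr)=1$, invoke Theorem~\ref{thee6}, and check $\mathcal{A}(C)\neq 0$. The only difference is that the paper dismisses this last point as ``easy to see,'' whereas you supply an explicit nonzero annihilator $\gamma^{e-1}x^{n-1}$ of the maximal ideal $\langle\gamma,x\rangle\supseteq C$ --- a correct and welcome filling-in of that detail.
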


\begin{proof}
Since $C^{\perp}=\pi \big(\mathcal{A}(C) \big)$, $d(C^{\perp})=d\big(\mathcal{A}(C) \big)$.
It is easy to see that if $C\neq R^n$, then $\mathcal{A}(C) \neq 0$.
Since $\mathcal{A}(C)$ is a  $\lambda$-constacyclic code of length $n$ over  $R$, we have $d\big(\mathcal{A}(C) \big)=1$ by Theorem~\ref{thee6}.
Thus $d(C^{\perp})=1$.
\end{proof}

If $\widehat{\lambda}$  is a unit of $R$, it is known that the dual code of a $\widehat{\lambda}$-constacyclic code is a $\widehat{\lambda}^{-1}$-constacyclic code.
When $\widehat{\lambda}$ is non-invertible in $R$, the following examples show that the dual code of a  $\widehat{\lambda}$-constacyclic code may not be a constacyclic code.

\begin{example}\label{exa2}
Let $e=1$, then $R\cong\mathbb{F}_{q}$.
It is shown in Corollary \ref{cor1} that
$$\{0\}=\langle x^{n}\rangle\subsetneqq  \langle x^{n-1}\rangle \subsetneqq \cdots \subsetneqq \langle x\rangle \subsetneqq \langle x^0\rangle=S$$ are all ideals of $S$.
Let $C$ be a proper ideal of  $S$, i.e., $C\neq \{0\}$ and $C\neq S$, then there exists $1\leq i \leq n-1$ such that $C=\langle x^i \rangle$.
Then
$$G_1=\left(
  \begin{array}{ccccccccc}
    O_{(n-i)\times i} & I_{(n-i)\times (n-i)} \\
  \end{array}
\right)$$
is a generator matrix of $C$,
where $O_{k\times t}$ denotes the $k\times t$ zero matrix and $I_{t \times t}$ denotes the $t \times t$  identity matrix.

It is easy to see that
$$H_1=\left(
  \begin{array}{ccccccccc}
   I_{ i\times i}  & O_{i \times(n-i)}  \\
  \end{array}
\right)$$
 is a generator matrix of $C^{\perp}$.
\end{example}

For any $\widehat{\lambda} \in R$, note that $\textbf{a}=(\underbrace{0,\ldots,0}_{i-1},1,\underbrace{0,\ldots,0}_{n-i}) \in C^{\perp}$ and
$\tau_{\widehat{\lambda} }(\textbf{a})=(\underbrace{0,\ldots,0}_{i},1,\underbrace{0,\ldots,0}_{n-i-1}) \notin C^{\perp}$.
It follows that $C^{\perp}$ is not a $\widehat{\lambda}$-constacyclic code for any  $\widehat{\lambda} \in R$.

\begin{example}\label{exa3}
Let $e>1$, $n>0$ and $\lambda \in \gamma R\backslash \gamma^2 R$.
By  Example \ref{exa1}, any proper ideal of  $S$ has the form
$C=\langle\langle0,\ldots, 0, \gamma^k x^w,\gamma^{k+1},\ldots, \gamma^{e-1}\rangle\rangle$ for some $0 \leq k \leq e-1$ and $0 \leq w \leq n-1$ and $w,~k$ are not all zero.
Then a generator matrix of $C$ is
$$G_2=\left(
  \begin{array}{ccccccccc}
    O_{(n-w)\times w} & \gamma^kI_{(n-w)\times (n-w)} \\
    \gamma^{k+1}I_{ w\times w}  & O_{w \times(n-w)}
  \end{array}
\right).$$
It is easy to see that
$$H_2=\left(
  \begin{array}{ccccccccc}
   \gamma^{e-(k+1)}I_{ w\times w}  & O_{w \times(n-w)}  \\
   O_{(n-w)\times w} & \gamma^{e-k}I_{(n-w)\times (n-w)}
  \end{array}
\right)$$
 is a generator matrix of $C^{\perp}$.

If $w=0$, then $C^{\perp}=\langle \gamma^{e-k}\rangle$ is a $\widehat{\lambda}$-constacyclic code for any  $\widehat{\lambda} \in R$.
If $w\neq 0$, for any $\widehat{\lambda} \in R$, note that $\textbf{b}=(\underbrace{0,\ldots,0}_{w-1}, \gamma^{e-(k+1)},\underbrace{0,\ldots,0}_{n-w}) \in C^{\perp}$ and
$\tau_{\widehat{\lambda} }(\textbf{b})=(\underbrace{0,\ldots,0}_{w}, \gamma^{e-(k+1)},\underbrace{0,\ldots,0}_{n-w-1}) \notin C^{\perp}$.
It follows that $C^{\perp}$ is not a $\widehat{\lambda}$-constacyclic code for any  $\widehat{\lambda} \in R$.
\end{example}

Next, we will give a necessary and sufficient condition for the dual of a $\lambda$-constacyclic code to be a constacyclic code.
Firstly, we need the following lemma.

\begin{lemma}\label{lemm3}
Let $C$ be a $\lambda$-constacyclic code of length $n$ over  $R$ and $Tor_i(C)=\langle x^{T_i} \rangle$, where $0\leq T_i \leq n$, for $i=0,1,\ldots,e-1$.
Then $Tor_i \big(\mathcal{A}(C)\big)=\langle x^{n-T_{e-1-i}} \rangle$ for $i=0,1,\ldots,e-1$.
\end{lemma}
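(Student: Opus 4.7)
The plan is to prove the two inclusions separately: the harder inclusion $Tor_i(\mathcal{A}(C)) \subseteq \langle x^{n-T_{e-1-i}}\rangle$ by a direct computation using the canonical generators of $C$ from Theorem \ref{thee5}, and the reverse inclusion essentially for free from a cardinality count.

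First I would show the inclusion $\subseteq$. Suppose $\overline{v} \in Tor_i(\mathcal{A}(C))$, pick $v(x) = \sum_{k=0}^{n-1} v_k x^k \in R^n$ with $\gamma^i v(x) \in \mathcal{A}(C)$, and specialize the annihilation condition to the generator $f_{e-1-i}(x)$ of $C$, giving $\gamma^i v(x) f_{e-1-i}(x) = 0$ in $S$. The key observation is that multiplying $f_{e-1-i}$ by $\gamma^i$ wipes out every non-leading term (each such term has the form $\gamma^k(\cdots)$ with $k \geq e-i$, so $\gamma^{k+i} = 0$), leaving only $\gamma^i f_{e-1-i}(x) = \gamma^{e-1} x^{T_{e-1-i}}$. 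When reducing $x^{T_{e-1-i}} v(x)$ modulo $x^n - \lambda$, the wrap-around coefficients $v_k$ with $k \geq n - T_{e-1-i}$ pick up a factor of $\lambda$; but $\lambda \gamma^{e-1} = 0$ because $\lambda$ is non-invertible, hence lies in $\gamma R$. Thus the equation collapses to $\gamma^{e-1} \sum_{k=0}^{n-1-T_{e-1-i}} v_k x^{k + T_{e-1-i}} = 0$ in $S$, which forces $v_k \in \mathrm{ann}(\gamma^{e-1}) = \gamma R$ for $0 \leq k \leq n-1-T_{e-1-i}$. Reducing modulo $\gamma$ gives $\overline{v}(x) \in \langle x^{n-T_{e-1-i}}\rangle$, as required.

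Second, I would upgrade this inclusion to equality by a size argument. Let $T'_i$ denote the $i$th torsional degree of $\mathcal{A}(C)$, so the inclusion just established reads $T'_i \geq n - T_{e-1-i}$. Since $R$ is a Frobenius ring, $|C| \cdot |C^\perp| = q^{en}$, and by Theorem \ref{thee9} $|C^\perp| = |\mathcal{A}(C)|$ (because $\pi$ is a bijection). Applying Theorem \ref{thee4}(4) to both $C$ and $\mathcal{A}(C)$ then yields $\sum_{i=0}^{e-1} T'_i = en - \sum_{i=0}^{e-1} T_i = \sum_{i=0}^{e-1}(n - T_{e-1-i})$, so the pointwise lower bounds must all be tight, giving $T'_i = n - T_{e-1-i}$ for every $i$. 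The boundary cases $T_{e-1-i} \in \{0, n\}$ are handled uniformly by this argument.

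The main hurdle is the inclusion step, and the whole computation rests on two small cancellations that must be spotted: $\gamma^i f_{e-1-i}(x) = \gamma^{e-1} x^{T_{e-1-i}}$, so that only the leading term of the canonical generator survives the multiplication, and $\lambda \gamma^{e-1} = 0$, so that the reduction modulo $x^n - \lambda$ does not entangle the two halves of $v(x)$. Without either cancellation the annihilation condition would become a coupled system on all the coefficients $v_k$, and the torsional degree of $\mathcal{A}(C)$ would be much harder to read off.
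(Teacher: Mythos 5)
Your proof is correct and takes essentially the same two-step route as the paper's: first a one-sided bound $W_i \geq n-T_{e-1-i}$ on the torsional degrees of $\mathcal{A}(C)$, obtained by annihilating a torsion representative of $\mathcal{A}(C)$ against an element of $C$ whose $\gamma$-leading term is $\gamma^{e-1-i}x^{T_{e-1-i}}$ (the paper multiplies two generic representatives and reads off $\gamma^{e-1}x^{W_i+T_{e-1-i}}=0$, while you expand against the canonical generator $f_{e-1-i}(x)$ coefficient by coefficient using $\lambda\gamma^{e-1}=0$), and then the identical cardinality count $|\mathcal{A}(C)|=|C^{\perp}|=|R|^n/|C|$ combined with Theorem \ref{thee4}(4) to force all the inequalities to be equalities.
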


\begin{proof}
For any $i=0,1,\ldots,e-1$, let $Tor_i\big(\mathcal{A}(C)\big)=\langle x^{W_i} \rangle$, for some $0\leq W_i \leq n$.
Then there exists $f(x)\in S$ such that $\gamma^ix^{W_i}+\gamma^{i+1}f(x)\in \mathcal{A}(C)$.
Since $Tor_{e-1-i}(C)=\langle x^{T_{e-1-i}} \rangle$,
there exists $g(x)\in S$ such that $\gamma^{e-1-i}x^{T_{e-1-i}}+\gamma^{e-i}g(x)\in C$.
Thus $$\big(\gamma^ix^{W_i}+\gamma^{i+1}f(x)\big)\big(\gamma^{e-1-i}x^{T_{e-1-i}}+\gamma^{e-i}g(x)\big)= \gamma^{e-1}x^{W_i+T_{e-1-i}}=0,$$
which means that $W_i+T_{e-1-i}\geq n$, i.e., $W_i\geq n-T_{e-1-i}$.

Note that $$\left|\mathcal{A}(C) \right|=\left| C^{\perp} \right|= \dfrac{|R|^n}{|C|}= \dfrac{p^{men}}{p^{m\sum_{i=0}^{e-1}(n-T_i)}}=p^{m\sum_{i=0}^{e-1}T_i}$$ and
$$\left|\mathcal{A}(C) \right|= \prod_{i=0}^{e-1}\left| Tor_i\big(\mathcal{A}(C)\big)\right|=  p^{m\sum_{i=0}^{e-1}(n-W_i)}.$$
It follows that $\sum_{i=0}^{e-1}T_i=\sum_{i=0}^{e-1}(n-W_i)$, i.e., $\sum_{i=0}^{e-1}W_i=\sum_{i=0}^{e-1}(n-T_i)$.
By  $W_i\geq n-T_{e-1-i}$, we have $W_i= n-T_{e-1-i}$, i.e., $Tor_i \big(\mathcal{A}(C)\big)=\langle x^{n-T_{e-1-i}} \rangle$.
\end{proof}

Recall for any integer $k$, $P_k=\begin{pmatrix}
&   &  &  1\\
&   & \udots & \\
& 1 &  & \\
1 & & &
\end{pmatrix}_{k\times k}$.
For any $i\times j$ matrix $A$, let $A^{\circ}:=P_iAP_j.$

\begin{theorem}\label{thee66}
Let the notions be as in Theorem \ref{thee5}.
Let $C$ be a $\lambda$-constacyclic code of length $n$ over  $R$.
Then

\begin{description}
  \item[(1)] $C^{\perp}$ is permutation equivalent to a $\lambda$-constacyclic code.
  \item[(2)] $C^{\perp}$ is a $\widehat{\lambda}$-constacyclic code for some $\widehat{\lambda} \in R$ if and only if $C=\gamma^iR^n$ for some $0\leq i \leq e$.
      Moreover, if $C=\gamma^iR^n$ for some $0\leq i \leq e$, then
$C^{\perp}=\gamma^{e-i}R^n$ and $C,~C^{\perp}$ are $\widehat{\lambda}$-constacyclic codes for any  $\widehat{\lambda} \in R$.
\end{description}

\end{theorem}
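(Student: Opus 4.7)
Part (1) is immediate from Theorem \ref{thee9}: since $C^\perp = \pi(\mathcal{A}(C))$, and $\mathcal{A}(C)$ is always a $\lambda$-constacyclic code of length $n$ over $R$, the dual $C^\perp$ is the image of a $\lambda$-constacyclic code under the coordinate-reversal permutation $\pi$, hence is permutation equivalent to a $\lambda$-constacyclic code. For the ``if'' direction of Part (2), if $C = \gamma^i R^n$ then pairing with standard basis vectors shows $C^\perp = \{v \in R^n : \gamma^i v_k = 0 \text{ for all } k\} = \gamma^{e-i} R^n$, and both $\gamma^i R^n$ and $\gamma^{e-i} R^n$ are stable under $\tau_{\widehat\lambda}$ for every $\widehat\lambda \in R$, since $\widehat\lambda \cdot \gamma^k R \subseteq \gamma^k R$.

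For the ``only if'' direction of Part (2), my plan is to push the $\widehat\lambda$-constacyclic closure of $C^\perp$ back through $\pi$ onto $\mathcal{A}(C)$ and then exploit two different shifts at once. A direct computation gives
\[
\pi \circ \tau_{\widehat\lambda} \circ \pi(w_0, w_1, \ldots, w_{n-1}) = (w_1, w_2, \ldots, w_{n-1}, \widehat\lambda w_0);
\]
writing this operator as $\sigma_{\widehat\lambda}$, the identity $C^\perp = \pi(\mathcal{A}(C))$ together with $\tau_{\widehat\lambda}(C^\perp) \subseteq C^\perp$ yields $\sigma_{\widehat\lambda}(\mathcal{A}(C)) \subseteq \mathcal{A}(C)$. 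Combining this new closure with the $\lambda$-shift that already stabilizes $\mathcal{A}(C)$, for every $w \in \mathcal{A}(C)$ I obtain
\[
w - \tau_\lambda\bigl(\sigma_{\widehat\lambda}(w)\bigr) = \bigl((1-\lambda\widehat\lambda)w_0,\, 0,\, \ldots,\, 0\bigr) \in \mathcal{A}(C).
\]

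Because $\lambda$ is non-invertible we have $\lambda \in \gamma R$, so $1 - \lambda\widehat\lambda$ reduces to $1$ modulo $\gamma R$ and is therefore a unit of $R$; hence $(w_0, 0, \ldots, 0) \in \mathcal{A}(C)$ for every $w \in \mathcal{A}(C)$. Let $I := \{w_0 : w \in \mathcal{A}(C)\}$, which is an ideal of $R$ and therefore equals $\gamma^j R$ for some $0 \le j \le e$. Iterating $\tau_\lambda$ on $(w_0, 0, \ldots, 0)$ places any $w_0 \in I$ into any coordinate slot (so $\gamma^j R^n \subseteq \mathcal{A}(C)$), while iterating $\sigma_{\widehat\lambda}$ shows that $w_k$ occurs as the first coordinate of $\sigma_{\widehat\lambda}^k(w)$ for each $0 \le k \le n-1$ (so $\mathcal{A}(C) \subseteq \gamma^j R^n$). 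Thus $\mathcal{A}(C) = \gamma^j R^n$, whence $C^\perp = \pi(\gamma^j R^n) = \gamma^j R^n$; taking duals once more (using that $R$ is Frobenius, so $(C^\perp)^\perp = C$) and applying the ``if'' computation to $\gamma^j R^n$ gives $C = \gamma^{e-j} R^n$. The main technical point is the unit property of $1 - \lambda\widehat\lambda$, which would fail for invertible $\lambda$; this is the single place where the NIE hypothesis on $\lambda$ becomes indispensable.
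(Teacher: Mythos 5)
Your proof is correct, and for part (2) it takes a genuinely different route from the paper's. The paper proves the ``only if'' direction by first computing the torsional degrees of $\mathcal{A}(C)$ via Lemma \ref{lemm3}, assembling a block-triangular generator matrix of $\mathcal{A}(C)$ from the representation of Theorem \ref{thee5}, reversing it with $\pi$ to get a generator matrix of $C^{\perp}$, and then exhibiting explicit codewords whose $\widehat{\lambda}$-shift escapes $C^{\perp}$ unless the matrix collapses to $\gamma^{i_1}I_n$. You instead conjugate the shift through $\pi$ to get $\sigma_{\widehat{\lambda}}=\pi\circ\tau_{\widehat{\lambda}}\circ\pi$ acting on $\mathcal{A}(C)$, and play the two stabilizing operators against each other: the identity $w-\tau_{\lambda}(\sigma_{\widehat{\lambda}}(w))=\bigl((1-\lambda\widehat{\lambda})w_0,0,\ldots,0\bigr)$, together with the fact that $1-\lambda\widehat{\lambda}$ is a unit precisely because $\lambda\in\gamma R$, forces $\mathcal{A}(C)$ to equal $\gamma^jR^n$ for some $j$, after which the double-dual property over the Frobenius ring $R$ finishes the job. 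Your argument is shorter and more conceptual — it bypasses Lemma \ref{lemm3}, the generator-matrix bookkeeping, and the case analysis on $k$ and $W_{i_1}$, and it isolates cleanly the one place where non-invertibility of $\lambda$ is indispensable — while the paper's approach has the side benefit of producing an explicit generator matrix for $C^{\perp}$ for arbitrary $C$, which is of independent interest. The only external ingredients you use, namely $(C^{\perp})^{\perp}=C$ and $|C||C^{\perp}|=|R|^n$ over a finite Frobenius ring, are already invoked by the paper itself (in the proof of Lemma \ref{lemm3}), so nothing is circular.
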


\begin{proof}
(1)~Since $\mathcal{A}(C)$  is a $\lambda$-constacyclic code of length $n$ over $R$ and $C^{\perp}=\pi \big(\mathcal{A}(C) \big)$, $C^{\perp}$ is permutation equivalent to a $\lambda$-constacyclic code.

(2)~When $C=R^n$, then $C^{\perp}=0$ is a constacyclic code.

Next, let $C\neq R^n$.
From Lemma~\ref{lemm3}, $Tor_i \big(\mathcal{A}(C)\big)=\langle x^{n-T_{e-1-i}} \rangle$ for $i=0,1,\ldots,e-1$.
Let $W_i= n-T_{e-1-i}$ for $i=0,1,\ldots,e-1$.
Since $C\neq R^n$, $\mathcal{A}(C)\neq 0$.
Let $0\leq i_1 \leq e-1$ be the smallest integer $i$ such that $W_i\neq n$
and let $0 \leq i_1 < i_2 \cdots < i_k \leq e-1$, where $1\leq k \leq e$ such that
$n>W_{i_1}=W_{i_1+1}=\cdots = W_{i_2-1}> W_{i_2}=W_{i_2+1}=\cdots = W_{i_3-1}>\cdots>W_{i_{k-1}}=W_{i_{k-1}+1}= \cdots = W_{i_{k}-1}> W_{i_k} =W_{i_k+1} = \cdots = W_{e-1} \geq 0.$

Since $\mathcal{A}(C)$ is a $\lambda$-constacyclic code of length $n$ over  $R$,
$\mathcal{A}(C)$ has the form
\begin{equation*}
\mathcal{A}(C)=\langle g_0(x), g_1(x),\ldots, g_{e-1}(x)\rangle \subseteq \mathcal{R}[x]/ \langle  x^n-\lambda \rangle,
\end{equation*}
such that
\begin{description}
  \item[(i)] for $0\leq i < i_1$, $g_i(x)=0$.
  \item[(ii)] for $i_1\leq i \leq e-1$, $g_i(x)=\gamma^i x^{W_i}+ \sum_{j=i+1}^{e-1} \gamma^{j}x^{w_{j,i}}h_{j,i}(x),$
where $h_{j,i}(x) \in \mathcal{T}_{R}[x]$ is either zero or a unit of $S$ and $w_{j,i} + \deg(h_{j,i}) < W_{j}$.
\end{description}
Then
$$\begin{pmatrix}
\gamma^{i_1+1}A_{1,1}  &  \gamma^{i_1+1}A_{1,2}  &  \gamma^{i_1+1}A_{1,3}  &   \cdots &  \gamma^{i_1+1}A_{1,k}                & \gamma^{i_1}I_{n-W_{i_1}}\\
\gamma^{i_2+1}A_{2,1} &  \gamma^{i_2+1}A_{2,2} &  \gamma^{i_2+1}A_{2,3} &  \cdots & \gamma^{i_2}I_{W_{i_1}-W_{i_2}}  &                                                 \\
\vdots                               &    \vdots                             &   \vdots                               &  \udots &                                                             &                                                 \\
\gamma^{i_{k-1}+1}A_{k-1,1}&\gamma^{i_{k-1}+1}A_{k-1,2}&\gamma^{i_{k-1}}I_{W_{i_{k-2}}-W_{i_{k-1}}}&  &  &\\
\gamma^{i_k+1}A_{k,1}&\gamma^{i_{k}}I_{W_{i_{k-1}}-W_{i_{k}}}  & & &  & \\
\end{pmatrix}$$
is a generator matrix of $\mathcal{A}(C) $, where $I_a$ is the $a \times a$ identity matrix for any $a$.
By the proof of (1), we have $C^{\perp}=\pi \big(\mathcal{A}(C) \big)$, then
$$\begin{pmatrix}
\gamma^{i_1}I_{n-W_{i_1}} &  \gamma^{i_1+1}A^{\circ}_{1,k}  & \cdots & \gamma^{i_1+1}A^{\circ}_{1,3} & \gamma^{i_1+1}A^{\circ}_{1,2} & \gamma^{i_1+1}A^{\circ}_{1,1}\\
                         &   \gamma^{i_2}I_{W_{i_1}-W_{i_2}} &  \cdots  &  \gamma^{i_2+1}A^{\circ}_{2,3} & \gamma^{i_2+1}A^{\circ}_{2,2}   &  \gamma^{i_2+1}A^{\circ}_{2,1} \\
                          &                                                        &  \ddots    & \vdots                              &      \vdots                                 &         \vdots                           \\
&             &         & \gamma^{i_{k-1}}I_{W_{i_{k-2}}-W_{i_{k-1}}} & \gamma^{i_{k-1}+1}A^{\circ}_{k-1,2}  & \gamma^{i_{k-1}+1}A^{\circ}_{k-1,1}\\
&  & & &  \gamma^{i_{k}}I_{W_{i_{k-1}}-W_{i_{k}}} & \gamma^{i_k+1}A^{\circ}_{k,1}\\
\end{pmatrix}$$
is a generator matrix of $C^{\perp}$.

Assume that $C^{\perp}$ is a $\widehat{\lambda}$-constacyclic codes for some $\widehat{\lambda} \in R$.
Suppose that $k> 1$, then there exists $$\textbf{a}=(\underbrace{0,\ldots,0,\gamma^{i_1}}_{n-W_{i_1}},a_{n-W_{i_1}},a_{n-W_{i_1}+1},\ldots, a_{n-1}) \in C^{\perp}.$$
Note that $$\tau_{\widehat{\lambda} }(\textbf{a})= (\underbrace{\widehat{\lambda}a_{n-1},0,\ldots,0}_{n-W_{i_1}}, \gamma^{i_1}, a_{n-W_{i_1}}, \ldots, a_{n-2})  \notin C^{\perp},$$
which is a contradiction.
Hence $k=1$.
Suppose that $W_{i_1}\neq0,$ then $\begin{pmatrix}
\gamma^{i_1}I_{n-W_{i_1}} &  \gamma^{i_1+1}A^{\circ}_{1,1} \\
\end{pmatrix}$ is a generator matrix of $C^{\perp}$, where $A^{\circ}_{1,1}$ is an $W_{i_1}\times n$ matrix.
So there exists $$\textbf{b}=(\underbrace{0,\ldots,0,\gamma^{i_1}}_{n-W_{i_1}},b_{n-W_{i_1}},b_{n-W_{i_1}+1},\ldots, b_{n-1}) \in C^{\perp}.$$
But $$\tau_{\widehat{\lambda} }(\textbf{b})= (\underbrace{\widehat{\lambda}b_{n-1},0,\ldots,0}_{n-W_{i_1}}, \gamma^{i_1}, b_{n-W_{i_1}}, \ldots, b_{n-2})  \notin C^{\perp},$$ which is a contradiction.
Thus $W_{i_1}=0$.
As a result, $\gamma^{i_1}I_{n}$ is a generator matrix of $C^{\perp}$, which yields that $C^{\perp}=\gamma^{i_1}R^n$, $C=\gamma^{e-i_1}R^n$ and $1\leq e-i_1 \leq e$.

On the other hand, if $C=\gamma^iR^n$ for some $0\leq i \leq e$, then
$C^{\perp}=\gamma^{e-i}R^n$.
It is obvious that $C,~C^{\perp}$ are $\widehat{\lambda}$-constacyclic codes for any  $\widehat{\lambda} \in R$.
\end{proof}

\section{Constacyclic Codes over Finite PIRs}
In this section, $\mathbf{R}$ is always a finite PIR with identity.
Then $\mathbf{R}$ is isomorphic to a product of finite chain rings, which means that there exists a ring isomorphism
\begin{equation*}
\begin{aligned}
\psi : ~~\mathbf{R}~~ & \longrightarrow ~~R^{(1)} \times R^{(2)} \times \cdots \times R^{(s)} \\
                   r ~~  & \longmapsto ~~  (r^{(1)},r^{(2)},\ldots, r^{(s)}),
\end{aligned}
\end{equation*}
where $R^{(t)}$ is a finite commutative chain ring with identity and $r^{(t)}\in R^{(t)}$ for $1\leq t \leq s.$
Let
\begin{equation*}
\begin{aligned}
\psi^{(t)} : ~~\mathbf{R}~~ & \longrightarrow ~~R^{(t)}\\
                   r ~~  & \longmapsto ~~  r^{(t)},
\end{aligned}
\end{equation*}
for each $1\leq t \leq s$ and $\psi(r)=(r^{(1)},r^{(2)},\ldots, r^{(s)})=\left( \psi^{(1)}(r),\psi^{(2)}(r),\ldots, \psi^{(s)}(r) \right)$.
It is clear that $\psi^{(t)}$ is a surjective homomorphism.
As we mentioned in Section 1, $\psi^{(t)}: \mathbf{R}\rightarrow R^{(t)}$ can be extended to the following maps
$$\psi^{(t)}: \mathbf{R}^n\rightarrow R^{(t)n}, $$
$$\psi^{(t)}: \mathbf{R}[x]\rightarrow R^{(t)}[x],$$
$$\psi^{(t)}: \dfrac{\mathbf{R}[x]}{ \langle x^n-\lambda\rangle}\rightarrow \dfrac{R^{(t)}[x]}{ \langle x^n-\psi^{(t)}(\lambda)\rangle}$$ in the usual way.

Let $\lambda \in \mathbf{R}$ be a non-invertible element and $\psi(\lambda)=\left( \psi^{(1)}(\lambda),\psi^{(2)}(\lambda),\ldots, \psi^{(s)}(\lambda) \right)$.
$\Psi$ denotes the following map
\begin{equation}\label{equu7}
\begin{aligned}
\Psi : ~~~~\frac{\mathbf{R}[x]}{\langle x^n-\lambda \rangle}~~ & \longrightarrow ~~ \frac{R^{(1)}[x]}{\langle x^n-\psi^{(1)}(\lambda) \rangle} \times  \frac{R^{(2)}[x]}{\langle x^n-\psi^{(2)}(\lambda) \rangle} \times \cdots \times  \frac{R^{(s)}[x]}{\langle x^n-\psi^{(s)}(\lambda) \rangle},\\
f(x)           ~~  & \longmapsto ~~  \left(\psi^{(1)}\left(  f(x)\right),\psi^{(2)}\left(  f(x)\right),\ldots, \psi^{(s)}\left(  f(x)\right)\right).
\end{aligned}
\end{equation}
It is easy to see that $\Psi $ is a surjective isomorphism.

Let $C^{(t)}$ be an ideal of $\dfrac{R^{(t)}[x]}{\langle x^n-\psi^{(t)}(\lambda) \rangle}$ for each $1\leq t \leq s$. Then \emph{the Chinese
product} is defined by
$\text{CRT}(C^{(1)}, C^{(2)},\cdots, C^{(s)})=\{ ~\Psi^{-1}(f^{(1)}, f^{(2)},\cdots, f^{(s)})~ |~ f^{(t)} \in C^{(t)}, ~t=1,2,\ldots,s~\}.$

From (\ref{equu7}), $C$ is a  $\lambda$-constacyclic code of length $n$ over  $\mathbf{R}$ if and only if for any $~1\leq t \leq s,$ $\psi^{(t)}(C)$ is a  $\psi^{(t)}(\lambda)$-constacyclic code of length $n$ over  $R^{(t)}$.
Thus, the following theorem can be easily obtained.

\begin{theorem}\label{thee7}
$C$ is an ideal of $\dfrac{\mathbf{R}[x]}{\langle x^n-\lambda \rangle}$ if and only if  $\psi^{(t)}(C)$ is an ideal of $\dfrac{R^{(t)}[x]}{\langle x^n-\psi^{(t)}(\lambda) \rangle}$ for each $1\leq t \leq s$.

If $C^{(t)}$ is an ideal of $\dfrac{R^{(t)}[x]}{\langle x^n-\psi^{(t)}(\lambda) \rangle}$ for each $1\leq t \leq s$, then $\text{CRT}(C^{(1)}, C^{(2)},\cdots, C^{(s)})$ is an ideal  of  $\dfrac{\mathbf{R}[x]}{\langle x^n-\lambda \rangle}$.
\end{theorem}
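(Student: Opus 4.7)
My plan is to reduce the entire theorem to two elementary ring-theoretic facts: that $\Psi$ from (\ref{equu7}) is a ring isomorphism (already established in the excerpt immediately after its definition), and that the surjective projections $\pi_t$ from the direct product onto its $t$-th factor satisfy $\psi^{(t)} = \pi_t \circ \Psi$. I would also invoke the standard structure result that ideals in a finite direct product of rings $A_1 \times \cdots \times A_s$ are precisely the cartesian products $I_1 \times \cdots \times I_s$ with each $I_t$ an ideal of $A_t$; this follows from multiplying by the central idempotents $e_t = (0, \ldots, 0, 1, 0, \ldots, 0)$ and recovering the $t$-th component as $\pi_t(I)$.

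For the forward implication, I would argue as follows: if $C$ is an ideal of $\mathbf{R}[x]/\langle x^n-\lambda\rangle$, then $\Psi(C)$ is an ideal of $\prod_{t=1}^s R^{(t)}[x]/\langle x^n-\psi^{(t)}(\lambda)\rangle$ because $\Psi$ is a ring isomorphism; hence for every $t$ the projection $\pi_t(\Psi(C)) = \psi^{(t)}(C)$ is an ideal of $R^{(t)}[x]/\langle x^n-\psi^{(t)}(\lambda)\rangle$, which is what is required. Conversely, if each $\psi^{(t)}(C)$ is an ideal, and $C$ is realized as $\Psi^{-1}\bigl(\psi^{(1)}(C) \times \cdots \times \psi^{(s)}(C)\bigr)$, then $C$ is the preimage of a product of ideals of the product ring under a ring isomorphism, and hence is itself an ideal.

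The second assertion now comes for free. Given ideals $C^{(t)}$ of $R^{(t)}[x]/\langle x^n-\psi^{(t)}(\lambda)\rangle$ for $1 \le t \le s$, the cartesian product $C^{(1)} \times \cdots \times C^{(s)}$ is an ideal of the product ring by the structure result, and by the very definition of the Chinese product, $\text{CRT}(C^{(1)},\ldots,C^{(s)}) = \Psi^{-1}\bigl(C^{(1)} \times \cdots \times C^{(s)}\bigr)$; applying $\Psi^{-1}$, which is a ring isomorphism, therefore yields an ideal of $\mathbf{R}[x]/\langle x^n - \lambda\rangle$. No step here is genuinely difficult; the entire content is the ring-isomorphism property of $\Psi$ combined with the decomposition of ideals in a finite direct product. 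The one mildly delicate point is the converse of the biconditional, where one must recognize that $\Psi(C)$ automatically splits as $\psi^{(1)}(C) \times \cdots \times \psi^{(s)}(C)$ precisely when $C$ is an ideal, so that the correspondence $C \leftrightarrow (\psi^{(1)}(C), \ldots, \psi^{(s)}(C))$ is genuinely bijective between ideals on both sides.
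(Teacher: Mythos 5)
Your approach is essentially the paper's: the authors give no real proof here, simply observing that $\Psi$ in (\ref{equu7}) is a ring isomorphism and that, consequently, $C$ is a $\lambda$-constacyclic code over $\mathbf{R}$ if and only if each $\psi^{(t)}(C)$ is a $\psi^{(t)}(\lambda)$-constacyclic code over $R^{(t)}$, from which the theorem is ``easily obtained.'' Your write-up makes the same reduction explicit via the standard description of ideals in a finite direct product, and the forward implication and the statement about $\mathrm{CRT}(C^{(1)},\ldots,C^{(s)})$ are handled correctly.

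One point needs repair. In the converse of the biconditional you justify the identity $\Psi(C)=\psi^{(1)}(C)\times\cdots\times\psi^{(s)}(C)$ by saying the splitting holds ``precisely when $C$ is an ideal''; but in that direction $C$ being an ideal is exactly what you are trying to prove, so as written the step is circular. Moreover, the splitting (and indeed the ``if'' direction itself) is false for an arbitrary subset $C$: for $\mathbf{R}=\mathbb{F}_2\times\mathbb{F}_2$, $n=1$, $\lambda=0$, the diagonal $\{(0,0),(1,1)\}$ projects onto all of each factor, yet is not an ideal. The hypothesis doing the work is the one implicit throughout the paper, namely that $C$ is a \emph{linear} code, i.e.\ an $\mathbf{R}$-submodule of $\mathbf{R}^n$. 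Under that hypothesis $C$ is closed under multiplication by the central idempotents $\epsilon_t=\psi^{-1}(0,\ldots,0,1,0,\ldots,0)$, so $C=\bigoplus_t \epsilon_t C$ and $\Psi(C)$ does split as the product of its projections; from there your argument (preimage under the ring isomorphism $\Psi$ of a product of ideals) goes through. So the fix is simply to replace ``precisely when $C$ is an ideal'' by ``because $C$ is an $\mathbf{R}$-submodule,'' and to note that this linearity assumption is what rules out pathologies such as the diagonal example.
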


According to the reference \cite{rff8}, we can obtain the minimum Hamming distances of the NIE-constacyclic codes over finite PIRs as follows.
\begin{theorem}\label{thee8}(\cite{rff8})
Let $C$ be a nonzero $\lambda$-constacyclic code of length $n$ over  $\mathbf{R}$.
Then $ d(C) = \min \{ ~d\left(\psi^{(t)}(C) \right)~|~ 1\leq t \leq s \}$.
\end{theorem}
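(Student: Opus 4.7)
The plan is to prove the equality by establishing the two inequalities $d(C)\geq\min_{t}d(\psi^{(t)}(C))$ and $d(C)\leq\min_{t}d(\psi^{(t)}(C))$, using the ring-theoretic structure provided by $\Psi$ together with a support-set decomposition of Hamming weight. The backbone is the observation that since $\Psi$ in (\ref{equu7}) is a ring isomorphism and $C$ is an ideal of $\mathbf{R}[x]/\langle x^{n}-\lambda\rangle$, the image $\Psi(C)$ is an ideal of the product ring $\prod_{t=1}^{s}R^{(t)}[x]/\langle x^{n}-\psi^{(t)}(\lambda)\rangle$. Since every ideal in a finite direct product of unital rings decomposes coordinatewise,
\[
\Psi(C)=\psi^{(1)}(C)\times\psi^{(2)}(C)\times\cdots\times\psi^{(s)}(C),
\]
so $C=\mathrm{CRT}(\psi^{(1)}(C),\ldots,\psi^{(s)}(C))$. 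Moreover, because $r=0$ in $\mathbf{R}$ if and only if $\psi^{(t)}(r)=0$ for every $t$, any $\mathbf{c}\in\mathbf{R}^{n}$ satisfies $\mathrm{supp}(\mathbf{c})=\bigcup_{t=1}^{s}\mathrm{supp}(\psi^{(t)}(\mathbf{c}))$, hence $\mathrm{wt}(\mathbf{c})\geq\mathrm{wt}(\psi^{(t)}(\mathbf{c}))$ for every $t$.

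For the direction $d(C)\geq\min_{t}d(\psi^{(t)}(C))$, I would pick a minimum-weight nonzero codeword $\mathbf{c}\in C$. Since $\mathbf{c}\neq 0$ and $\Psi$ is injective, $\psi^{(t_{1})}(\mathbf{c})\neq 0$ for some $t_{1}$, so $d(C)=\mathrm{wt}(\mathbf{c})\geq\mathrm{wt}(\psi^{(t_{1})}(\mathbf{c}))\geq d(\psi^{(t_{1})}(C))\geq\min_{t}d(\psi^{(t)}(C))$. For the reverse inequality, let $t_{0}$ attain the minimum; since $C\neq 0$ at least one projection is nonzero, so I may assume $\psi^{(t_{0})}(C)\neq 0$, making the minimum finite. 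Choose $\mathbf{c}^{(t_{0})}\in\psi^{(t_{0})}(C)$ of weight $d(\psi^{(t_{0})}(C))$. By the coordinatewise splitting of $\Psi(C)$, the tuple equal to $\mathbf{c}^{(t_{0})}$ in slot $t_{0}$ and $0$ elsewhere lies in $\Psi(C)$; applying $\Psi^{-1}$ yields $\mathbf{c}\in C$ with $\psi^{(t_{0})}(\mathbf{c})=\mathbf{c}^{(t_{0})}$ and $\psi^{(t)}(\mathbf{c})=0$ for $t\neq t_{0}$. The support identity then gives $\mathrm{wt}(\mathbf{c})=\mathrm{wt}(\mathbf{c}^{(t_{0})})=d(\psi^{(t_{0})}(C))$, so $d(C)\leq\mathrm{wt}(\mathbf{c})=\min_{t}d(\psi^{(t)}(C))$.

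The main obstacle is really only the coordinatewise decomposition $\Psi(C)=\prod_{t}\psi^{(t)}(C)$, which rests on the standard structure of ideals in a product of unital rings; once this is in hand, the rest is straightforward support-set bookkeeping. A minor care-point is ensuring that the minimum over $t$ is attained at some index $t_{0}$ with $\psi^{(t_{0})}(C)\neq 0$, which follows because otherwise every $d(\psi^{(t)}(C))$ would equal $n+1$, forcing $\Psi(C)=0$ and hence $C=0$, contradicting the hypothesis.
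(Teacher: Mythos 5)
The paper states this theorem without proof, citing it from \cite{rff8}; your argument is correct, complete, and is essentially the standard CRT argument behind that citation: the coordinatewise decomposition $\Psi(C)=\psi^{(1)}(C)\times\cdots\times\psi^{(s)}(C)$ of an ideal in a finite product of unital rings, together with the support identity $\mathrm{supp}(\mathbf{c})=\bigcup_{t}\mathrm{supp}\left(\psi^{(t)}(\mathbf{c})\right)$, gives both inequalities. You also correctly dispose of the degenerate case by noting that a zero projection has distance $n+1$ under the paper's convention, so the minimum is attained at a nonzero component.
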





Combining  Theorem \ref{thee6} with Theorem \ref{thee8}, we can easily obtain the following.

\begin{corollary}\label{cor3}
Let $C$ be a $\lambda$-constacyclic code of length $n$ over  $\mathbf{R}$.
Assume that there exists $1 \leq j  \leq s$ such that $\psi^{(j)}(\lambda)$ is a non-invertible element of $R^{(j)}$ and $\psi^{(j)}(C)\neq 0$.
Then  $ d(C) =1$.
\end{corollary}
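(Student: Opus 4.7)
The plan is to chain the two ingredients already established: Theorem~\ref{thee8}, which computes the minimum Hamming distance of a constacyclic code over $\mathbf{R}$ as the minimum of the distances of its CRT-components, and Theorem~\ref{thee6}, which says every nonzero NIE-constacyclic code over a finite chain ring has minimum distance exactly one.

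First I would observe that $C$ itself must be nonzero, since by hypothesis $\psi^{(j)}(C)\neq 0$ for the specified index $j$, and the zero ideal would project to zero. Hence $d(C)$ is defined by the minimum Hamming weight of a nonzero codeword, and Theorem~\ref{thee8} applies to give $d(C)=\min\{d(\psi^{(t)}(C)):1\le t\le s\}$ (understanding that components with $\psi^{(t)}(C)=0$ contribute $n+1$ by convention and are irrelevant).

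Next I would focus on the distinguished index $j$. By the CRT decomposition recalled in (\ref{equu7}), $\psi^{(j)}(C)$ is a $\psi^{(j)}(\lambda)$-constacyclic code of length $n$ over the finite chain ring $R^{(j)}$. Since $\psi^{(j)}(\lambda)$ is non-invertible in $R^{(j)}$, $\psi^{(j)}(C)$ is an NIE-constacyclic code over a finite chain ring, and by hypothesis it is nonzero. Theorem~\ref{thee6} therefore gives $d(\psi^{(j)}(C))=1$.

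Finally, combining these two facts, $d(C)\le d(\psi^{(j)}(C))=1$, and since $C$ is nonzero we also have $d(C)\ge 1$, forcing $d(C)=1$. There is no real obstacle here; the entire argument is a two-line citation of the preceding theorems once one verifies that the hypothesis ``$\psi^{(j)}(C)\neq 0$ with $\psi^{(j)}(\lambda)$ non-invertible'' is exactly what is needed to invoke Theorem~\ref{thee6} on the $j$th component.
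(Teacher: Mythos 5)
Your argument is exactly the one the paper intends: it states that Corollary~\ref{cor3} follows by combining Theorem~\ref{thee6} (every nonzero NIE-constacyclic code over a finite chain ring has minimum distance one) with Theorem~\ref{thee8} (the distance over $\mathbf{R}$ is the minimum over the CRT components), and your write-up just makes the routine verifications explicit. The proof is correct and matches the paper's approach.
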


\begin{remark}\label{rema5}
{\rm Let $\mathbf{R}=\psi^{-1}(\underbrace{R^{0} \times R^{0} \times \cdots \times R^{0}}_{s})$, where $R^{0}$ is a finite chain ring,
then $|\mathbf{R}|=|R^{0}|^s$.
Let $C^{0}$ be an MDS $\lambda^0$-constacyclic code of length $n$ over $R^{0}$, i.e., $d(C^{0})=n+1-\log_{|R^{0}|}|C^{0}|$.
Let $C=\text{CRT}(\underbrace{C^{0} \times C^{0} \times \cdots C^{0}}_{s-1} \times 0), ~\lambda=\psi^{-1}(\underbrace{\lambda^{0},\lambda^{0},\ldots, \lambda^{0}}_{s-1}, 0)\in \mathbf{R}$,
then $C$ is a $\lambda$-constacyclic code  of length $n$ over $\mathbf{R}$ and $|C|=|C^{0}|^{s-1}$.
By Theorem \ref{thee8}, we have $d(C)=d(C^{0})$.
On the other hand, $n+1-\log_{|\mathbf{R}|}|C|=n+1-\log_{|R^{0}|^s}|C^{0}|^{s-1}|=n+1-\dfrac{s-1}{s}\log_{|R^{0}|}|C^{0}|=d(C^{0})+\dfrac{1}{s}\log_{|R^{0}|}|C^{0}|.$
According to the Singleton Bound, for a linear code $C^{\prime}$ of length $n$ and cardinality $|C|$ over $\mathbf{R}$,
 the minimum Hamming distance of $C^{\prime}$ satisfies $d(C^{\prime})\leq ~d(C^{0})+\dfrac{1}{s}\log_{|R^{0}|}|C^{0}|.$
So if $\log_{|R^{0}|}|C^{0}|< s$,  then $C$ has the maximal minimum Hamming distance among the linear codes of length $n$ and cardinality $|C|$ over $\mathbf{R}$.}
\end{remark}

\begin{example}\label{exa5}
{\rm Let $\mathbb{F}_{q}$ be the finite field of order $q$ and $\mathbf{R}=\psi^{-1}(\underbrace{\mathbb{F}_{q} \times \mathbb{F}_{q} \times \cdots \times \mathbb{F}_{q}}_{s}$), where $s\geq 2$.
Then $|\mathbf{R}|=q^s$.
Let $\alpha$ be a primitive element of $\mathbb{F}_{q}$ and $0< k < \min\{s,q\}$ be an integer.
Let $C_{0}$ be the cyclic code of length $q-1$ over $\mathbb{F}_{q}$ generated by $\prod_{i=0}^{q-1-k}(x-\alpha^i)$.
It means that $C_{0}$ is the $[q-1,k, q-k]$ Reed-Solomon code over $\mathbb{F}_{q}$, which is an MDS code.

Let $C=\text{CRT}(\underbrace{C_{0},\cdots,C_{0}}_{s-1}, 0)$ and $\lambda=\psi^{-1}(\underbrace{1,\ldots, 1}_{s-1}, 0)\in \mathbf{R}$.
Then $C$ is a $\lambda$-constacyclic code  of length $q-1$ over $\mathbf{R}$, $|C|=|C_0|^{s-1}=q^{k(s-1)}$ and $d(C)=d(C_0)=q-k$.
Let $C^{\prime}$ be a linear code of length $q-1$ and cardinality $q^{k(s-1)}$ over $\mathbf{R}$,
then by  the Singleton Bound, we have $d(C^{\prime}) \leq ~q-1-\log_{| \mathbf{R}|}q^{k(s-1)}+1=q-k+\dfrac{k}{s}.$
Since $0< \dfrac{k}{s} < 1 $, $d(C^{\prime}) \leq q-k$.
Thus $C$ is an optimal code over $\mathbf{R}$ in the sense that it achieves the maximum possible minimum Hamming distance for length $q-1$ and cardinality $q^{k(s-1)}$.}
\end{example}

\begin{example}\label{exa6}

{\rm Let $R^{0}$ be the Galois ring  of characteristic $p^t$ and cardinality $p^{tm}$ and $\mathbf{R}=\psi^{-1}(\underbrace{R^{0} \times R^{0} \times \cdots \times R^{0}}_{s}$), where $s\geq 2$.
Then $| \mathbf{R}|=p^{tms}$.
Let $n|(p^m-1)$ and $\alpha$ be an element of order $n$ in $R^{0}$.
Let $0< k < \min\{s,n\}$ and $C_{0}$ be the cyclic code of length $n$ over $R^{0}$ generated by $\prod_{i=0}^{n-k-1}(x-\alpha^i)$.
It is clear that $C_{0}$ is an MDS cyclic code which has minimum Hamming distance $n-k+1$ and cardinality $p^{tmk}$.

Let $C=\text{CRT} (\underbrace{C_{0},\cdots,C_{0}}_{s-1}, 0)$ and $\lambda=\psi^{-1}(\underbrace{1,\ldots, 1}_{s-1}, 0)\in \mathbf{R}$.
Then $C$ is a $\lambda$-constacyclic code  of length $n$ over $\mathbf{R}$, $|C|=|C_0|^{s-1}=p^{tmk(s-1)}$ and $d(C)=d(C_0)=n-k+1$.
Let $C^{\prime}$ be a linear code of length $n$ and cardinality $p^{tmk(s-1)}$ over $\mathbf{R}$,
then by the  Singleton Bound, we can get  $d(C^{\prime}) \leq ~n-\log_{| \mathbf{R}|}p^{tmk(s-1)}+1=n-k+1+\dfrac{k}{s}.$
Since $0< \dfrac{k}{s} < 1 $, we have $d(C^{\prime}) \leq n-k+1$.
Thus $C$ is an optimal code over $\mathbf{R}$ in the sense that it achieves the maximum possible minimum Hamming distance for length $n$ and cardinality $p^{tmk(s-1)}$.}
\end{example}

\bigskip
\textbf{Acknowledgments}~This work was supported by NSFC (Grant No. 11871025).


\begin{thebibliography}{99}
\addcontentsline{toc}{section}{\protect \heiti Reference} 



\bibitem{rf7}Blackford, T. Cyclic codes over $\mathbb{Z}_4$ of oddly even length.  Discrete Appl. Math. 128 (2003), no. 1, 27-46.

\bibitem{rf3} Calderbank, A. R.; Hammons, A. R.; Kumar, P. V.; Sloane, N. J. A.; Sol\'{e}, P. A linear construction for certain Kerdock and Preparata codes. Bull. Amer. Math. Soc. (N.S.) 29 (1993), no. 2, 218-222.

\bibitem{rf16}Cao, Y.; Cao, Y.; Dinh, H. Q.; Fu, F.; Gao, J.; Sriboonchitta, S. Constacyclic codes of length $np^s$ over $\mathbb F_{p^m}+u\mathbb F_{p^m}$. Adv. Math. Commun. 12 (2018), no. 2, 231-262.


\bibitem{rf17}Cao, Y.; Cao, Y.; Dinh, H. Q.; Fu, F.; Gao, J.; Sriboonchitta, S. A class of repeated-root constacyclic codes over $\mathbb F_{p^m}[u]/\langle u^e\rangle$ of Type $2$. Finite Fields Appl. 55 (2019), 238-267.

\bibitem{rf18} Cao, Y.; Cao, Y.; Dinh, H. Q.; Fu, F.; Ma, F. Construction and enumeration for self-dual cyclic codes of even length over $\mathbb F_{2^m}+u\mathbb F_{2^m}$. Finite Fields Appl. 61 (2020), 101598, 28 pp.

\bibitem{rf15}Chen, B.; Dinh, H. Q.; Liu, H.; Wang, L. Constacyclic codes of length $2p^s$ over $\mathbb F_{p^m}+u\mathbb F_{p^m}$. Finite Fields Appl. 37 (2016), 108-130.

\bibitem{rf6}Dinh, H.Q.; L\'{o}pez-Permouth, S.R. Cyclic and negacyclic codes over finite chain rings.  IEEE Trans. Inform. Theory 50 (2004), no. 8, 1728-1744.

\bibitem{rf11}Dinh, H. Q. Negacyclic codes of length $2^s$ over Galois rings. IEEE Trans. Inform. Theory 51 (2005), no. 12, 4252-4262.

\bibitem{rf13}Dinh, H. Q. Constacyclic codes of length $2^s$ over Galois extension rings of $\mathbb F_2+u\mathbb F_2$. IEEE Trans. Inform. Theory 55 (2009), no. 4, 1730-1740.

\bibitem{rf14}Dinh, H. Q.  Constacyclic codes of length $p^s$ over $\mathbb F_{p^m}+u\mathbb F_{p^m}$. J. Algebra 324 (2010), no. 5, 940-950.

\bibitem{rff8}Dougherty, S. T.; Kim, J.-L.; Kulosman, H. MDS codes over finite principal ideal rings. Des. Codes Cryptogr. 50 (2009), no. 1, 77-92.

\bibitem{rf8}Dougherty, S. T.; Ling, S. Cyclic codes over $\mathbb{Z}_4$ of even length.  Des. Codes Cryptogr. 39 (2006), no. 2, 127-153.

\bibitem{rf9}Dougherty, S. T.; Park, Y. H. On modular cyclic codes. Finite Fields Appl. 13 (2007), no. 1, 31-57.

\bibitem{rf4} Hammons, A. R.; Kumar, P. V.; Calderbank, A. R.; Sloane, N. J. A.; Sol\'{e}, P. The $\mathbb{Z}_4$-linearity of Kerdock, Preparata, Goethals, and related codes. IEEE Trans. Inform. Theory 40 (1994), no. 2, 301-319.

\bibitem{rf12}Liu, H.; Maouche, Y. Some repeated-root constacyclic codes over Galois rings. IEEE Trans. Inform. Theory 63 (2017), no. 10, 6247-6255.

\bibitem{rf19}Liu, Y.; Shi, M. Repeated-root constacyclic codes of length $klp^s$. Bull. Malays. Math. Sci. Soc. 43 (2020), no. 2, 2009-2027.

\bibitem{re2} McDonald, B. R. Finite Rings With Identity. Pure and Applied Mathematics. Vol. 28, Marcel Dekker, New York, 1974.

\bibitem{rf5} Nechaev, A. A. Kerdock's code in cyclic form. (Russian) Diskret. Mat. 1 (1989), no. 4, 123-139; translation in Discrete Math. Appl. 1 (1991), no. 4, 365-384.

\bibitem{re3} Nechaev, A. A. Finite rings with applications. Handbook of Algebra. Vol. 5, 213-320, M. Hazewinkel, Ed. North-Holland, 2008.


\bibitem{re4}Norton, G. H.; S\u{a}l\u{a}gean, A. On the structure of linear and cyclic codes over a finite chain ring. Appl. Algebra Engrg. Comm. Comput. 10 (2000), no. 6, 489-506.

\bibitem{rf10}Sobhani, R.; Esmaeili, M. Cyclic and negacyclic codes over the Galois ring $GR(p^2,m)$. Discrete Appl. Math. 157 (2009), no. 13, 2892-2903.











\
\end{thebibliography}
\end{document}